	\setlist{itemsep=0pt,topsep=.5ex}
\newcommand{\tableref}[1]{\hyperref[#1]{Table~\ref*{#1}}}
\newcommand{\figref}[1]{\hyperref[#1]{Figure~\ref*{#1}}}
\newcommand{\figandfigref}[2]%
	{\hyperref[#1]{Figures~\ref*{#1}} \hyperref[#1]{and~\ref*{#2}}}
\newcommand{\assref}[2]%
	{\hyperref[#1]{Assumption~\ref*{#1}}.\hyperref[#1]{\ref*{#2}}}
\newcommand{\assrefs}[3]%
	{\hyperref[#1]{Assumptions~\ref*{#1}}.\hyperref[#1]{\ref*{#2}}--\hyperref[#1]{\ref*{#1}}.\hyperref[#1]{\ref*{#3}}}
\newcommand{\tabref}[1]{\hyperref[#1]{Table~\ref*{#1}}}
\newcommand{\sectref}[1]{\hyperref[#1]{Section~\ref*{#1}}}
\newcommand{\sectsref}[1]{\hyperref[#1]{Sections~\ref*{#1}}}
\newcommand{\stepref}[1]{\hyperref[#1]{Step~\ref*{#1}}}
\newcommand{\defref}[1]{\hyperref[#1]{Definition~\ref*{#1}}}
\newcommand{\thmref}[1]{\hyperref[#1]{Theorem~\ref*{#1}}}
\newcommand{\lemmaref}[1]{\hyperref[#1]{Lemma~\ref*{#1}}}
\newcommand{\factref}[1]{\hyperref[#1]{Fact~\ref*{#1}}}
\newcommand{\factsubref}[2]%
	{\hyperref[#1]{Fact~\ref*{#1}}.\hyperref[#1]{\ref*{#2}}}
\newcommand{\exref}[1]{\hyperref[#1]{Example~\ref*{#1}}}
\newcommand{\exsref}[1]{\hyperref[#1]{Examples~\ref*{#1}}}
\newcommand{\algref}[1]{\hyperref[#1]{Algorithm~\ref*{#1}}}
\newcommand{\alineref}[1]{\hyperref[#1]{Line~\ref*{#1}}}
\newcommand{\alinesref}[1]{\hyperref[#1]{Lines~\ref*{#1}}}
\newcommand{\remref}[1]{\hyperref[#1]{Remark~\ref*{#1}}}
\newcommand{\condref}[1]{\hyperref[#1]{Condition~\ref*{#1}}}
\newcommand{\Eqref}[1]{\hyperref[#1]{Equation~\ref*{#1}}}
\newcommand{\constrref}[1]{\hyperref[#1]{Construction~\ref*{#1}}}
\newcommand{\obsref}[1]{\hyperref[#1]{Observation~\ref*{#1}}}
\newcommand{\corref}[1]{\hyperref[#1]{Corollary~\ref*{#1}}}
\newtheoremstyle{plainbreak}%
  {}{}%
  {\itshape}{}%
  {\bfseries}{}
  {\newline}{}
\newtheoremstyle{definitionbreak}%
  {}{}%
  {}{}%
  {\bfseries}{}
  {\newline}{}
\newtheoremstyle{remarkbreak}%
  {}{}%
  {}{}%
  {\itshape}{}
  {\newline}{}
\theoremstyle{definition}
  \newtheorem{defn}{Definition}[section]
  \newtheorem{example}{Example}[section] 
\theoremstyle{remark}
  \newtheorem*{sketch}{Proof Sketch}
\theoremstyle{plain}
  \newtheorem{theorem}[defn]{Theorem}
  \newtheorem{lemma}[defn]{Lemma}
  \newtheorem{observation}[defn]{Observation}
  \newtheorem{fact}[defn]{Fact}
\theoremstyle{definitionbreak}
\theoremstyle{plainbreak}
\newcommand{\domain}[1]{\mathit{dom}(#1)}
\def\LL(#1){\ensuremath{\mathit{LL}(#1)}}
\def\LR(#1){\ensuremath{\mathit{LR}(#1)}}
\def\LALR(#1){\ensuremath{\mathit{LALR}(#1)}}
\def\SLR(#1){SLR(#1)}
\def\SLL(#1){SLL($#1$)}
\def\SLRo(#1){\ensuremath{\mathit{SLR}^\bullet\!(#1)}}
\def\Fi#1(#2){\mathit{First}_{#1}(#2)}
\def\Fo#1(#2){\mathit{Follow}_{#1}(#2)}
\def\FiFo#1(#2,#3){\Fi_{#1}(#3) \cdot_{#1} \Fo_{#1}(#2)}
\def\FIFO#1(#2,#3){\mathit{FiFo}_{#1}(#3,#2)}
\newcommand{\type}{\mathit{type}}
\newcommand{\iso}{\cong}
\newcommand{\move}{\mathrel\vdash}
\newcommand{\tf}[1]{\textsf{#1}}
\newcommand{\yields}[1][]{%
  \def\temp{#1}%
  \ifx\temp\empty\operatorname{::=}\else%
    \operatornamewithlimits{::=}\limits_{\tf{#1}}%
  \fi}
\newcommand{\card}[1]{| #1 |}              
\newcommand{\tup}[1]{\langle #1 \rangle}   
\newcommand{\emptyseq}{\varepsilon}        
\newcommand{\Voc}{\Sigma}       
\newcommand{\GVoc}{\Theta}       
\newcommand{\nd}[1]{\dot{#1}}   
\newcommand{\ed}[1]{\bar{#1}}   
\newcommand{\att}{\mathit{att}}   
\newcommand{\rank}{\mathit{rank}}   
\newcommand{\lab}{\mathit{lab}}               
\newcommand{\front}{\mathit{front}}               
\newcommand{\rear}{\mathit{rear}}               
\newcommand{\Nat}{\mathbb{N}}
\newcommand{\aut}[1]{\mathfrak{#1}}
\newcommand{\G}{\mathcal{G}}
\renewcommand{\L}{{\mathcal{L}}}
\renewcommand{\LL}{{\mathbb{L}}}
\def\tdcfg#1#2{\@ifnextchar[{\@tdcfg{#1}{#2}}{({#1},{#2})}}
\def\@tdcfg#1#2[#3]{({#3},{#1},{#2})} 
\def\cfG(#1)(#2)(#3){%
  \!\left\lceil\frac{#1}{}\!\text{\scriptsize$\circ$}\!\frac{#2}{#3}\right\rceil
}
\newcommand{\CDOT}{{\,\centerdot\,}}
\newcommand{\xyrightarrow}[3]{\mathop{\!\!\!\xymatrix@C=#1{{}\ar@{#2}[r]_{#3}&{}}\!\!\!}\nolimits}
\newcommand{\xyRightarrow}[3]{\mathop{\!\!\!\xymatrix@C=#1{{}\ar@<1pt>@{#2}[r]\ar@<-1pt>@{#2}[r]_{#3}&{}}\!\!\!}\nolimits}
\newcommand{\xyarrow}[2]{%
  \sbox{0}{$\scriptstyle#2$}%
  \mathop{\!\!\!\xymatrix@C\dimexpr\wd0+4pt\relax{{}\ar@{#1}[r]_{#2}&{}}\!\!\!}%
}
\newcommand\Input[1]{\textit{input}({#1})}
\newcommand{\sem}[1]{\left\llbracket #1 \right\rrbracket}
\newcommand{\gcomp}{\mathop{\odot}}
\def\buc(#1,#2,#3){[#1] #2 \CDOT #3}
\def\epsilon{\varepsilon}
\def\emptyset{\varnothing}
\def\theta{\vartheta}
\def\rho{\varrho}
\def\phi{\varphi}
\def\autoconf(#1,#2,#3){#3 {\scriptstyle\lozenge} [#1]^{#2}}
\def\GG{\mathbb{G}}
\def\pstate(#1,#2){\langle #1, #2 \rangle}
\def\cfaconf(#1,#2){#1 {\scriptstyle\blacklozenge} #2}
\newcommand{\idblank}[1]{\emptyseq^{(#1)}}
\newcommand{\VFollow}{\mathit{follow}}
\newcommand{\VNext}{\mathit{next}}
\newcommand{\VWork}{\mathit{frontier}}
\newcommand{\VDone}{\mathit{done}}
\newcommand{\assuref}[1]{\hyperref[#1]{Assumption~\ref*{#1}}}
\newcommand{\Blanks}{\mathcal{B}}
\newcommand{\closure}[1]{\mathop{\textit{Cl}}(#1)}
\newcommand{\Terms}{\GVoc}
\newcommand{\AllTerms}{\mathbf{\GVoc}}
\newcommand{\gmove}{\mathrel\Vdash}
\newcommand{\parto}{\rightharpoonup}
\newcommand{\detaut}[1]{\aut{#1}_\mathrm{d}}
\newcommand{\gatom}[3]{\tup{#1}^{#2}_{#3}}
\newcommand{\gblank}[2]{\tup{\emptyseq}^{(#1)}_{#2}}
\newcommand{\sblank}[2]{\emptyseq^{(#1)}_{#2}}
\newcommand{\CorrectConf}[1]{\mathcal C_{#1}}
\newcommand\Red{red!75!black}
\newcommand\BackgroundColor{gray!15!white}
\tikzset{
  x=8mm,y=8mm,>=latex,            
  background rectangle/.style
  ={fill=\BackgroundColor,rounded corners=4pt
      },
  glass/.style ={opacity=0,text opacity=0},     
  satin/.style ={opacity=0.3,text opacity=0.3},
  e/.style={inner sep=0.0pt,minimum size=0pt},
  o/.style={circle,draw,fill=white,font=\scriptsize,inner sep=1pt,minimum size=2.5mm},
  }
\newenvironment{automaton}[1][x=10mm,y=10mm]%
{\begin{tikzpicture}[inner frame sep=4pt,show background rectangle,
    baseline=(current bounding box.center),
    node distance=8mm,
    initial text=,
    every state/.style={fill=white,minimum size=1mm,inner sep=2pt},
    every label/.style={node id},
    #1]}%
  {\end{tikzpicture}}
\newenvironment{graph}[1][x=10mm,y=10mm]%
{\begin{tikzpicture}[%
    inner frame xsep=0pt,inner frame ysep=4pt,show background rectangle,
    baseline=(current bounding box.center),
    every label/.style={node id},
    #1]}%
  {\end{tikzpicture}}
\def\frontptr(#1){\path (f-#1) edge[-,double distance=1pt,double] (#1)}
\def\rearptr(#1){\path (r-#1) edge[-,double distance=1pt,double] (#1)}
\def\invnode(#1)(#2){
  \node[o,glass] (#1) at (#2) {$#1$};
}
\def\inode(#1)(#2){
  \node[o] (#1) at (#2) {$#1$};
}
\def\fnode(#1)(#2){
  \inode (#1)(#2)
  \node[e] (f-#1) at ($(#2)-(0.5,0)$) {};
  \frontptr(#1);
}
\def\Fnode(#1)(#2)#3{
  \inode (#1)(#2)
  \node[e] (f-#1) at ($(#2)-(#3,0)-(0.5,0)$) {};
  \frontptr(#1);
}
\def\rnode(#1)(#2){
  \inode (#1)(#2)
  \node[e] (r-#1) at ($(#2)+(0.5,0)$) {};
  \rearptr(#1);
}
\def\frnode(#1)(#2){
  \fnode(#1)(#2)
  \node[e] (r-#1) at ($(#2)+(0.5,0)$){};
  \rearptr(#1);
}
\def\DownEdge[#1]{\tikz \draw (0pt,8pt) edge[#1] (0pt,0pt);}
\def\blob(#1) at (#2)#3{%
  \node[blub](#1)at(#2) {\begin{tabular}{@{}c@{}}  #3 \end{tabular}
};
  }
\def\enlargebb{\node [glass,fit= (current bounding box),inner sep=2pt] {};}
\def\Highlight#1#2#3(#4){
  \begin{pgfonlayer}{background} 
      \node (#4) [subgraph,fit= #1,inner sep=#3,fill=#2] {}; 
   \end{pgfonlayer}
}
\def\Just#1%
\def\Graph{\@ifnextchar[{\@Graph}{\@GrapH}}
\def\@Graph[#1]#2{%
  \BOX{%
    \begin{tikzpicture}[x=8mm,y=8mm,label distance=-2pt,>=latex,#1]
     #2%
    \end{tikzpicture}%
  } 
}
\def\@GrapH#1{%
  \BOX{%
    \begin{tikzpicture}[x=8mm,y=8mm,label distance=-2pt,>=latex]
     #1%
    \end{tikzpicture}%
  } 
}
\def\proGraph{\@ifnextchar[{\@proGraph}{\@pro@Graph}}
\def\@proGraph[#1]#2{\BOX{%
  \begin{tikzpicture}[x=8mm,y=-7mm,>=latex,every label/.style={elab},#1]
     #2
    \enlargebb
   \end{tikzpicture}}}
\def\@pro@Graph#1{\BOX{%
  \begin{tikzpicture}[x=8mm,y=-7mm,>=latex,every label/.style={elab}]
    #1
    \enlargebb
  \end{tikzpicture}}}
\def\uvar(#1)#2{\@ifnextchar[{\@uvar(#1)#2}{%
  \node (#1-node) at ($(#1)+(0,1)$) {#2};
  \path[arm] (#1) edge (#1-node);
}}
\def\@uvar(#1)#2[#3]{%
  \node (#1-node) at ($(#1)+(#3,1)$) {#2};
  \path[arm] (#1) edge (#1-node);
}
\def\ustar#1#2{\@ifnextchar[{\@ustar{#1}{#2}}{\Graph{%
      \node (r) [term] at (1,1) {#1};
      \node (h) [nont] at (1,2) {#2};
      \path (r) edge[arm] (h);
}}}
\def\@ustar#1#2[#3]{\Graph[#3]{%
      \node (r) [term] at (1,1) {#1};
      \node (h) [nont] at (1,2) {#2};
      \path (r) edge[arm] (h);
}}
\def\custar#1#2#3{\@ifnextchar[{\@custar(#1)#2}{\proGraph{%
      \node (r) [term] at (1,1) {#1};
      \node (h) [nont] at (1,2) {#2};
      \path (r) edge[arm] (h);
      \node (c) [term] at (1,3) {#3};
}}}
\def\@custar#1#2#3[#4]{\proGraph[#4]{%
      \node (r) [term] at (1,1) {#1};
      \node (h) [nont] at (1,2) {#2};
      \path (r) edge[arm] (h);
      \node (c) [term] at (1,3) {#3};
}}
\newcommand{\BOX}[1]{\begin{array}{@{}c@{}}#1\end{array}}
\definecolor{lime}{HTML}{A6CE39}
\DeclareRobustCommand{\orcidicon}{
	\begin{tikzpicture}
	\draw[lime, fill=lime] (0,0) 
	circle [radius=0.16] 
	node[white] {{\fontfamily{qag}\selectfont \tiny ID}};
	\draw[white, fill=white] (-0.0625,0.095) 
	circle [radius=0.007];
	\end{tikzpicture}
	\hspace{-2mm}
}
\begin{document}
\title{Finite Automata for Efficient Graph Recognition}
\author{Frank Drewes\!\!\!\orcidD{}%
  \institute{
    Ume\aa\ universitet \\
    SE-90187 Ume\aa \\
    Sweden}
    \email{drewes@cs.umu.se}
  \and Berthold Hoffmann
  \!\!\!\orcidH{}
  \institute{
    Universität Bremen\\
    D-28334 Bremen \\
    Germany}
    \email{hof@uni-bremen.de}
  \and Mark Minas\!\!\!\orcidM{}
  \institute{
    Universität der Bundeswehr München \\
    D-85577 Neubiberg \\
    Germany}
    \email{mark.minas@unibw.de}
}
\def\authorrunning{F. Drewes, B. Hoffmann, M. Minas}
\def \titlerunning {{\sl Finite automata for efficient graph recognition}}
\maketitle
\begin{abstract}
  Engelfriet and Vereijken have shown that linear graph grammars based
  on hyperedge replacement generate graph languages that can be
  considered as interpretations of regular string languages over
  typed symbols.
  In this paper we show that finite automata can be lifted from
  strings to graphs within the same framework.
  For the efficient recognition of graphs with these automata, we make
  them deterministic by a modified powerset construction, and
  state sufficient conditions under which deterministic finite graph
  automata recognize graphs without the need to use backtracking.
\end{abstract}

\section{Introduction}
\label{s:intro}

Engelfriet and Vereijken \cite{Engelfriet-Vereijken:97} have shown
that linear graph grammars based on hyperedge replacement can be
considered as interpretations of linear string grammars: typed (``doubly ranked'')
symbols of an alphabet are interpreted as basic graphs that have
front and rear interfaces of nodes, and string concatenation is
interpreted as the composition of two graphs by gluing the rear of the
first to the front of the second graph. Graph languages constructed in
this way are of bounded pathwidth, and are thus potentially more efficiently
recognizable than general hyperedge replacement languages, which are
known to be NP-complete. However, without additional restrictions even
these graph languages are NP-complete~\cite{Aalbersberg-Ehrenfeucht-Rozenberg:86}.

In this paper we study how finite automata over graph symbols can be
interpreted to recognize graph languages efficiently. Given a graph as
input, the transitions of such an automaton consume the graph step by
step while changing the state of the automaton, in the end reaching either
an accepting or a rejecting state. If the automaton used is nondeterministic,
a naive decision procedure for determining whether the input graph is
accepted would have to use backtracking. While we show in this paper that
the transition relation can be
made deterministic by a modified powerset construction, backtracking may
still be needed. This is due to the fact that, whereas a string starts
with a unique first symbol to be read by the first
transition of the computation of a deterministic finite automaton, in general several
alternative basic graphs represented by that symbol may be spelled off at the front of a graph.
We provide two sufficient criteria under which
automata can choose between them without the need to backtrack.
These criteria resemble similar criteria known from efficient parsing algorithms for
context-free hyperedge replacement languages
\cite{Drewes-Hoffmann-Minas:19a,Drewes-Hoffmann-Minas:19}.

Work on efficient parsing algorithms for  grammars
started in the late 1980s, initiated by the realization that, in general, the graph
languages generated by these grammars can be
NP-complete~\cite{Aalbersberg-Ehrenfeucht-Rozenberg:86,Lange-Welzl:87}. Early
polynomial algorithms were based on restrictions which either ensure efficiency
of the well-known Cocke-Younger-Kasami algorithm (adapted to hyperedge replacement
grammars) or make sure that the derivation trees of generated graphs mirror a
unique recursive decomposition of graphs into smaller and smaller subgraphs,
see~\cite{Lautemann:90,Vogler:91,Drewes:93c}. Later work on parsing hyperedge
replacement languages include~\cite{chiang-et-al:2013,Gilroy-Maneth-et-al:17}
as well as the authors' own work on top-down and bottom-up parsers for these languages; see, e.g.,~\cite{Drewes-Hoffmann-Minas:15,Drewes-Hoffmann-Minas:17,Hoffmann-Minas:18,Drewes-Hoffmann-Minas:19,Drewes-Hoffmann-Minas:21}.
For a more extensive overview of work on efficient parsing for graph grammars,
including other types of grammars than those based on hyperedge replacement,
see~\cite[Section~10]{Drewes-Hoffmann-Minas:19}.

For the efficient recognition of graph languages, finite
automata have not attained the importance that they
have for 
string languages.
Some early work exists on finite automata for algebraic structures
\cite{Arbib-Giveon:68,Giveon-Arbib:68}, rooted directed acyclic graphs
\cite{Witt:81}, and infinite directed acyclic graphs
\cite{Kaminski-Pinter:92}.  Brandenburg and Skodinis have devised 
finite automata for linear node replacement
grammars~\cite{Brandenburg-Skodinis:05}.  Bozapalidis and Kalampakas
have studied automata on the hypergraphs of Engelfriet and Vereijken
\cite{Engelfriet-Vereijken:97} in an algebraic setting
\cite{Bozapalidis-Kalampakas:08,Kalampakas:11}.  And, Brugging, König
\emph{et al.} have recast these hypergraphs as cospans 
\cite{Blume-Bruggink-Friedrich-Koenig:13}, and have shown that finite
automata defining hypergraph languages are equivalent to Courcelle's
recognizable hypergraph languages \cite{Courcelle:90}; later these
results have been generalized to hereditary pushout categories
\cite{Bruggink-Koenig:18}.
All this work has focussed on 
\emph{defining} the graph languages accepted by  finite automata (e.g., by composing
the graphs labeling the edges passed on a walk through the
transition diagrams of the automata), rather then on \emph{recognizing}
graph languages in the sense of efficiently deciding their membership problem.
In particular, the efficiency of the recognition process has not been a
matter of concern.
In this paper, we focus on this aspect, which is of interest because
the analysis of graphs by finite automata
corresponds intuitively to a lexical analysis in the traditional string-based
setting.

The technical contributions of this paper revolve mainly around different
aspects of coping with nondeterminism, which differs in subtle ways from
the string case. Given a nondeterministic automaton over graph symbols,
we first apply a powerset construction to it (\sectref{sec:powerset automaton}).
Interestingly, this construction does not always result in a deterministic
automaton. We show, however, that the resulting automaton is indeed deterministic if the
construction starts from an unambiguous automaton, a condition that can
always be fulfilled (\sectref{sec:deterministic}).
Next, as mentioned above, even a deterministic automaton is only deterministic
at the string level. As the edges of a graph do not come in a
predefined processing order, recognition still requires backtracking in
general: anytime in the process, several different transitions
may be applicable, and a given transition may have several alternative
edges it may consume. We deal with these problems in
\sectsref{sec:select} and~\ref{sec:FEC}, resulting in linear-time
recognition without backtracking.


The remainder of this paper is structured as follows.
We define graphs with front and rear interfaces and their
composition in \sectref{s:lgl}.
In \sectref{s:fga}, we introduce finite automata over graph symbols
and define how they can be used to recognize graph languages.
\sectref{s:pfga} contains the main technical contributions, as
described above.
In \sectref{s:concl}, we 
conclude the paper and
indicate directions of future research.

\section{Graphs and Graph Composition}
\label{s:lgl}

We let $\Nat$ denote the set of non-negative integers, and $[n]$ the
set $\{1,\dots,n\}$ for all $n\in\Nat$. $A^*$ denotes the set of all
finite sequences over a set~$A$; the empty sequence is denoted by
$\emptyseq$, $A^\ast\setminus\{\emptyseq\}$ by $A^+$, and the length of a 
sequence $\alpha$ by $|\alpha|$. For a sequence $s=a_1\cdots a_n$ with
$n\in\Nat$ and $a_1,\dots,a_n \in A$, we let $[s]=\{a_1,\ldots,a_n\}$ 
and $s(i)=a_i$ for all $n\in\Nat$. For $f\colon A\to B^*$, we let
$f(a,i)=f(a)(i)$ if $a\in A$ and $i\in[|f(a)|]$.
We silently extend functions to sequences and let 
$f(a_1 \cdots a_n) = f(a_1) \cdots f(a_n)$, for all $n\in\Nat$,
$a_1,\dots,a_n \in A$, and all functions $f\colon A \to B$.
As usual, given a binary relation $R\subseteq A\times B$, we denote
the transitive closure by $R^+$ and the transitive reflexive closure by 
${R^*}$. Further, $\domain R$ denotes the domain of $R$, i.e.,
$\domain R=\{a\in A\mid\exists b\in B\colon (a,b)\in R\}$. Finally, we let
$R(a)=\{b\in B\mid (a,b)\in R\}$ for every $a\in A$.

We consider edge-labeled hypergraphs (which we simply call graphs).
Like Habel in \cite{Habel:92}, we
supply them with a front and a rear interface, each being a sequence of
nodes.\footnote{Other than in \cite{Habel:92}, however, we do not
  divide the attached nodes of edges into sources and
  targets.}

For the labeling of edges,
we consider a ranked alphabet $(\Voc,\rank)$ consisting of
a set of symbols and a function
$\rank \colon \Voc \to \Nat$ which assigns
a rank to each symbol $a\in\Voc$. The pair $(\Voc,\rank)$
is usually identified with $\Voc$, keeping $\rank$ implicit. 

\begin{defn}[Graph]\label{s:graph}%
  A \emph{graph} (over $\Voc$) is a tuple
  $G = (\nd G, \ed G, \att_G, \allowbreak \lab_G, \allowbreak
    \front_G, \allowbreak \rear_G)$, where $\nd G$ and $\ed G$ are
  disjoint finite sets of \emph{nodes} and
  \emph{edges}, respectively, the function
  $\att_G \colon \ed G \to \nd G ^*$ attaches repetition-free sequences of nodes to
  edges, the function $\lab_G \colon G \to \Voc$ labels edges with
  symbols in such a way that $\card{\att_G(e)} = \rank(\lab_G(e))$ for
  every edge $e \in \ed G$, and the repetition-free node sequences
  $\front_G, \rear_G \in \nd G^*$ specify the \emph{front} and
  \emph{rear} interface nodes.

  The lengths of front and rear interfaces classify graphs as
  follows: A graph $G$ \emph{has type $(i,j)$} if
  $\card{\front_G} = i$ and $\card{\rear_G} = j$; we then write $\type(G)=(i,j)$.
  $\GG_\Voc^{(i,j)}$ denotes the set of all graphs of type $(i,j)$, and
  $\GG_\Voc = \bigcup_{i,j \in\Nat} \GG_\Voc^{(i,j)}$ denotes the set
  of all graphs over $\Voc$, regardless of type.
\end{defn}

For graphs $G$ and $H$, a \emph{morphism} $m \colon G \to H$ is a pair
$m=(\nd m, \ed m)$ of functions $\nd m \colon \nd G \to \nd H$ and
$\ed m \colon \ed G \to \ed H$ that preserve attachments and labels,
i.e., $\att_H(\ed m(e)) = \nd m(\att_G(e))$ and %
$\lab_H(\ed m(e)) = \lab_G(e)$ for all  $e\in\ed G$. (Note that fronts
and rears need not be preserved.)
The morphism $m$ is \emph{injective} or \emph{surjective} if both
$\nd m$ and $\ed m$ have this property, and a \emph{subgraph
  inclusion} of $G$ in $H$ if $m (x) = x$ for every node and edge $x$
in $G$; then we write $G \subseteq H$.
$G$ and $H$ are \emph{isomorphic}, $G \iso H$, if there exists a surjective
and injective morphism $m\colon G \to H$ such that $\front_H = m(\front_G)$
and $\rear_H = m(\rear_G)$.

Consider $a\in \Voc$, $n\in\Nat$, and repetition-free 
sequences $\phi,\rho\in[n]^\ast$ such that $[\phi]\cup[\rank(a)]=[n]$
(where $[\phi]$ and $[\rank(a)]$ are not necessarily disjoint). 
Then $\tup{a}^\phi_\rho$ denotes
the graph $A$ with $\nd A=[n]$, $\ed A=\{e\}$, $\lab_A(e)=a$, $\att_A(e)=1\cdots \rank(a)$, 
$\front_A=\phi$, and $\rear_A=\rho$, whereas $\tup{\emptyseq}^{(n)}_\rho$ 
denotes the discrete graph $B$ with $\ed B=\emptyset$, $\nd B=[n]$, 
$\front_B=1\cdots n$, and $\rear_A=\rho$.
We call $\tup{a}^\phi_\rho$ an \emph{atom} and $\tup{\emptyseq}^{(n)}_\rho$ a \emph{blank}.

Note that $\tup{\emptyseq}^{(0)}_\emptyseq$ is the empty graph.
Further note that no atom has any node that neither is a front node, 
nor is it attached to its only edge. In particular, every rear node is also a front 
node or attached to the unique edge of the atom, or both. Moreover, all nodes of a blank 
occur in its front interface. Finally recall that our objective is efficient
recognition of graphs, which tries to compose an input graph from a sequence of atoms 
and a blank. The requirements on front and rear interfaces differ because recognition 
will process graphs from front to rear.

\begin{figure}[tb]
\centering
  \begin{tabular}{@{}c@{}}
    \begin{graph}[x=10mm,y=-10mm]
      \fnode(w)(1,1) 
      \inode(x)(1,0) 
      \rnode(y)(2,1) 
      \inode(z)(1,2) 
      \path
      (w) edge[->] node[left] {\small$a$} (x)
      (w) edge[->] node[above] {\small$b$} (y)
      (w) edge[->] node[left] {\small$a$} (z)
      ;
    \end{graph} \\\\[-2mm]
    ``Star'' graph
  \end{tabular}
  \qquad
  \begin{tabular}{@{}c@{}}
    \begin{graph}[x=10mm,y=-10mm]
      \frnode(1)(1,1) 
      \rnode(2)(1,3) 
      \path
      (1) edge[->] node[right] {\small$b$} (2)
      ;
    \end{graph} \\\\[-2mm]
    $\gatom{b}{1}{12}$
  \end{tabular}
 \qquad
  \begin{tabular}{@{}c@{}}
   \begin{graph}[x=10mm,y=-10mm]
      \frnode(1)(1,1) 
      \rnode(2)(1,2.5) 
      \frnode(3)(1,3) 
      \path
      (1) edge[->] node[right] {\small$a$} (2)
      ;
    \end{graph} \\\\[-2mm]
    $\gatom{a}{13}{123}$
  \end{tabular}
  \qquad
  \begin{tabular}{@{}c@{}}
   \begin{graph}[x=10mm,y=-10mm]
      \frnode(1)(1,1) 
      \rnode(2)(1,2) 
      \frnode(3)(1,2.5) 
      \frnode(4)(1,3) 
      \path
      (1) edge[->] node[right] {\small$a$} (2)
      ;
    \end{graph} \\\\[-2mm]
    $\gatom{a}{134}{1234}$
  \end{tabular}
  \qquad
  \begin{tabular}{@{}c@{}}
   \begin{graph}[x=10mm,y=-10mm]
      \fnode(1)(1,1) 
      \rnode(2)(1,2) 
      \frnode(3)(1,2.5) 
      \frnode(4)(1,3) 
      \path
      (1) edge[->] node[right] {\small$a$} (2)
      ;
    \end{graph} \\\\[-2mm]
    $\gatom{a}{134}{234}$
  \end{tabular}
  \qquad
  \begin{tabular}{@{}c@{}}
   \begin{graph}[x=10mm,y=-10mm]
      \fnode(1)(1,1) 
      \fnode(2)(1,2) 
      \fnode(3)(1,2.5) 
      \frnode(4)(1,3) 
    \end{graph} \\\\[-2mm]
    $\gblank{4}{4}$
  \end{tabular}
  \caption{A ``star'' graph, four atoms, and a blank.}
  \label{f:atoms-blank}
\end{figure}
\begin{example}\label{ex:atoms-blank}%
  \figref{f:atoms-blank} shows a ``star'' graph, atoms
  $\gatom{b}{1}{12}$, $\gatom{a}{13}{123}$, $\gatom{a}{134}{1234}$, 
  and $\gatom{a}{134}{234}$
  using the symbols $a$ and $b$, both of rank~$2$, and the blank
  $\gblank{4}{4}$.  As usual, we represent nodes by circles and binary
  edges by arrows.  Front interface nodes are connected with double
  lines to the left border of the graph, rear interface nodes with
  double lines to its right border.
  Front and rear nodes are ordered from top to bottom;
  the ``star'' graph, e.g., has
  the front interface~$w$ and the rear interface~$y$.
\end{example}

We follow Engelfriet and Vereijken \cite{Engelfriet-Vereijken:97} in
defining the composition of graphs in a way resembling string
concatenation.

\begin{defn}[Graph Composition]\label{d:graphops}
  Let $G \in \GG_\Voc^{(i,k)}$ and $H \in \GG_\Voc^{(k,j)}$.
  We assume for simplicity that $\rear_G=\front_H$, $\nd G\cap\nd H=[\rear_G]=[\front_H]$, 
  and $\ed G\cap\ed H=\emptyset$ (otherwise an appropriate isomorphic copy of $G$ or $H$ is used).
  The (\emph{typed}) \emph{composition} $G \gcomp H$ of $G$ and $H$ is the graph $C$ such that
  $\nd C=\nd G\cup\nd H$, $\ed C=\ed G\cup\ed H$, $\att_C=\att_G\cup\att_H$, 
  $\lab_C=\lab_G\cup\lab_H$, $\front_C=\front_G$, and $\rear_C=\rear_H$. Thus $C \in \GG_\Voc^{(i,j)}$.
\end{defn}

Note that the composition $G\gcomp H$ is defined on concrete graphs if
the assumptions in \defref{d:graphops} are satisfied, but is only defined up
to isomorphism if an isomorphic copy of $G$ or $H$ needs to be taken. To avoid
unnecessary technicalities, we shall assume that these assumptions
are indeed satisfied whenever convenient.

Graphs can be constructed from a finite set of basic graphs using
composition and disjoint union
\cite{Gadducci-Heckel:97,Bozapalidis-Kalampakas:06} (where the latter
concatenates the fronts and rears of the two graphs involved).
If we just use composition, we need finitely many basic graphs per
pair of front and rear interfaces. Instead of the simpler  ``atomic graphs''
proposed in \cite{Engelfriet-Vereijken:97} and by Blume et al.{} in
\cite{Blume-Bruggink-Friedrich-Koenig:13}, we use atoms and
blanks here.

\begin{lemma}
\label{l:graspel}
  Every graph $G$ whose isolated nodes all occur in its front interface is of the form
  $A_1\gcomp\cdots\gcomp A_n\gcomp B$ where $n=|\ed G|$, $A_1,\dots,A_n$ are atoms,
  and $B$ is a blank.
\end{lemma}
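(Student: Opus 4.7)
I would prove this by induction on $n=|\ed G|$.

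\textbf{Base case} ($n=0$): Since $G$ has no edges, every node of $G$ is isolated, hence lies in $\front_G$ by assumption. Write $\front_G=v_1\cdots v_m$ and $\rear_G=v_{i_1}\cdots v_{i_j}$ for some $i_1,\dots,i_j\in[m]$. Then $G\iso\gblank{m}{i_1\cdots i_j}$, so $G$ is (isomorphic to) a blank.

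\textbf{Inductive step} ($n\ge 1$): The plan is to peel off a single edge as the leftmost atom. Pick an arbitrary edge $e\in\ed G$ with $\att_G(e)=x_1\cdots x_r$ where $r=\rank(\lab_G(e))$. Let $y_1,\dots,y_s$ be the nodes of $\att_G(e)$ not lying in $[\front_G]$, listed without repetition in the order in which they first appear in $\att_G(e)$, and set
\[
  \rho \;=\; \front_G\cdot y_1\cdots y_s.
\]
Define a graph $A$ consisting of the single edge $e$ together with exactly the nodes $[\front_G]\cup\{x_1,\dots,x_r\}$, with $\front_A=\front_G$, $\rear_A=\rho$, and attachment as in $G$. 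Up to isomorphism $A$ is an atom $\gatom{a}{\phi}{\rho'}$ with $a=\lab_G(e)$: after renaming so that $x_1,\dots,x_r$ become $1,\dots,r$ and the remaining front nodes become $r+1,\dots,r+k$, the node set is $[r+k]$, and every node lies either in $\phi$ (the renamed front) or in $[r]=[\rank(a)]$, verifying $[\phi]\cup[\rank(a)]=[n]$. Repetition-freeness of $\phi$ and $\rho'$ follows from that of $\front_G$, $\rear_G$, and $\att_G(e)$.

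Now let $G'$ be obtained from $G$ by deleting $e$, keeping the same rear, and setting $\front_{G'}=\rho$. Then $|\ed{G'}|=n-1$, and the composition $A\gcomp G'$ is well-defined and equal to $G$: the node and edge sets union correctly, attachments and labels agree, and $\front_{A\gcomp G'}=\front_A=\front_G$ while $\rear_{A\gcomp G'}=\rear_{G'}=\rear_G$. To apply the induction hypothesis to $G'$, I must check that every isolated node of $G'$ lies in $\front_{G'}=[\rho]$. An isolated node of $G'$ is either isolated already in $G$, in which case it lies in $\front_G\subseteq[\rho]$, or it was attached in $G$ only to $e$, in which case it is among $x_1,\dots,x_r$ and hence in $[\rho]$. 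By induction, $G'=A_2\gcomp\cdots\gcomp A_n\gcomp B$, and setting $A_1=A$ finishes the proof.

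\textbf{Main obstacle.} The only nontrivial point is the bookkeeping that turns the concretely constructed $A$ into an atom in the formal sense of the paper: one must exhibit a renaming of the nodes of $A$ that puts them into $[n]$ while respecting the conditions $\att_A(e)=1\cdots\rank(a)$ and $[\phi]\cup[\rank(a)]=[n]$. The argument above relies on the fact that $A$ has no ``extra'' nodes beyond its front and the attachments of $e$, which is exactly what the construction of $\rho$ ensures; everything else is a routine verification of the composition equations.
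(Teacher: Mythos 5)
Your proof is correct and follows essentially the same route as the paper's: induction on the number of edges, peeling off one edge per step as a leftmost atom whose rear interface collects the old front nodes together with the newly exposed attachment nodes of that edge, and composing with the edge-deleted remainder. Your version makes a concrete choice of the atom's rear sequence and explicitly verifies that the induction hypothesis applies to $G'$ (isolated nodes of $G'$ lie in its front), which the paper leaves implicit; otherwise the two arguments coincide.
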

\begin{proof}
  Let $G \in \GG_\Voc$ with $|\ed G|=n$ such that every isolated 
  node of $G$ occurs in $[\front_G]$. We prove by induction on $n$ that there are atoms 
  $A_1,\ldots,A_n$ and a blank $B$ such that $G\iso A_1 \gcomp\cdots\gcomp A_n \gcomp B$.
  If $n=0$, all nodes of $G$ are isolated, and hence $\nd G=[\front_G]$. 
  Then $G\iso\gblank{m}{\rho}$ where $m=|\nd G|$ and $\rho$ is a sequence of length $k=|\rear_G|$ 
  such that $\rear_G(i)=\front_G(\rho(i))$ for all $i\in[k]$.

  Now consider $n>0$ and assume, as an induction hypothesis, that the proposition 
  holds for all graphs with $n-1$ edges. Pick any edge of $G$, say $e\in\ed G$,
  and let $G'$ be the graph obtained from $G$ by removing $e$ and setting $\front_{G'}$ 
  to any permutation of $[\front_G]\cup[\att_G(e)]$. Since $G'$ has $n-1$ edges, 
  there are $n-1$ atoms $A_2,\ldots,A_n$ and a blank $B$ such that 
  $G' \iso A_2 \gcomp\cdots\gcomp A_n \gcomp B$ by the induction hypothesis.
  Now let $m=|\front_G|$, $k=|\front_{G'}|$, and $A_1=\gatom{\lab_G(e)}{\phi}{\rho}$ 
  where $\rho$ is a permutation of $[k]$, and $\phi$ is a sequence of length $m$ 
  such that the following holds: $\front_{G'}(i)=\att_G(e,\rho(i))$ for all $i\in[k]$ 
  with $\front_{G'}(i)\in[\att_G(e)]$. Moreover, $\phi(i)=\rho(j)$ if
  $\front_G(i)=\front_{G'}(j)$, for all $i\in[m]$ and $j\in[k]$. 
  It is easy to see that 
  $G\iso A_1\gcomp G'\iso A_1\gcomp\cdots\gcomp A_n\gcomp B$.
\end{proof}

Note that the size of interfaces required to build $G$ according to \lemmaref{l:graspel}
depends on $G$. Hence, the lemma does not contradict the fact, mentioned in
\sectref{s:intro}, that the set of all graphs
composed from a given finite set of atoms and blanks is of bounded pathwidth.

\begin{example}\label{ex:composition}%
	Following the construction in the proof, the ``star'' graph $G$ in 
	\figref{f:atoms-blank} can be composed by 
	$G\iso\gatom{b}{1}{12}\gcomp\gatom{a}{13}{123}\gcomp\gatom{a}{134}{1234}\gcomp\gblank{4}{4}$,
	all shown in \figref{f:atoms-blank}. However, this is not the only composition. 
	A much simpler one is $G\iso\gatom{a}{1}{1}\gcomp\gatom{a}{1}{1}\gcomp\gatom{b}{1}{2}$.
\end{example}

The example shows that, unlike a string, a graph can be composed from atoms and blanks
in different ways. Moreover, the size of atoms and blanks can also vary substantially.

\section{Finite Automata over Graph Symbols}
\label{s:fga}\label{s:grasymlang}

We now follow the idea of Engelfriet and Vereijken
\cite{Engelfriet-Vereijken:97} to make use of the close resemblance of graphs
under (typed) composition and strings under concatenation, and denote graphs by strings.
Each symbol is interpreted as a graph, and thus a string translates into a graph,
provided that the types of composed graphs fit. To this end, we type each symbol by a
pair $(i,j)$ -- the type of the graph it will represent.

A \emph{typed alphabet} is a pair $(\AllTerms,\type)$
consisting of a (possibly infinite) set $\AllTerms$ of symbols
and a function $\type\colon\AllTerms\to\Nat\times\Nat$ which assigns
a pair $\type(a)$ of front and rear ranks to each symbol $a\in\AllTerms$. The pair $(\AllTerms,\type)$
is usually identified with $\AllTerms$, and $\type$ is kept implicit.

A string $w=a_1\cdots a_n\in\AllTerms^+$ is \emph{typed} if there are
$k_0,\dots,k_n\in\Nat$ such that
$\type(a_i)=(k_{i-1},k_i)$ for all $i\in[n]$. We let $\type(w)=(k_0,k_n)$. The
set of all typed strings over $\AllTerms$ is written $\AllTerms^\oplus$. The
concatenation $u\cdot v$ of typed strings $u,v\in\AllTerms^\oplus$
with $\type(u)=(i,j)$ and $\type(v)=(m,n)$ is only defined if $j=m$.

Note that ordinary (untyped) alphabets and strings over them can be considered
as special cases of typed ones by setting $\type(a)=(1,1)$ for every symbol~$a$.
Further note that there is no empty typed string $\emptyseq\in\AllTerms^+$ because its
type would be undefined. Instead, we introduce so-called \emph{blanks} below.

For a typed alphabet $\AllTerms$ and a ranked alphabet $\Voc$, an \emph{interpretation
operator} $\sem\cdot \colon \AllTerms \to \GG_\Voc$ assigns a
graph $\sem a\in\GG_\Voc$ with $\type(\sem a)=\type(a)$ to each symbol $a\in \AllTerms$.
We extend $\sem\cdot$ to typed strings over
$\AllTerms$ by $\sem{a_1\cdots a_n}\iso\sem{a_1}\gcomp\cdots\gcomp\sem{a_n}$
where $a_i\in\AllTerms$ for $i\in[n]$.

Every interpretation operator $\sem\cdot \colon \AllTerms \to
\GG_\Voc$ defines a congruence relation $\sim$ on typed strings, as follows:
for all $u,v\in\AllTerms^\oplus$, $u \sim v$ if and only if $\sem u\iso\sem v$.

Given a ranked alphabet $\Voc$, the \emph{canonical alphabet}
$(\AllTerms_\Voc,\type)$ is the typed alphabet given by
\begin{align*}
    \AllTerms_\Voc & = \{a^\phi_\rho\mid \gatom a \phi \rho \text{ is an atom in $\GG_\Voc$}\} \cup \Blanks\\
    \Blanks &=\{\sblank n \rho\mid \gblank n \rho  \text{ is a blank in $\GG_\Voc$}\}\\
    \type(a^\phi_\rho)&=(|\phi|,|\rho|)\quad\text{for all }a^\phi_\rho\in\AllTerms_\Voc\\
    \type(\sblank n \rho)&=(n,|\rho|)\quad\text{for all }\sblank n \rho\in\Blanks.
\end{align*}
The \emph{canonical interpretation} of $\AllTerms_\Voc$ is given by
$\sem{a^\phi_\rho}=\gatom a \phi \rho$ for all $a^\phi_\rho\in\AllTerms_\Voc\setminus\Blanks$ 
and $\sem{\sblank n \rho}=\gblank n \rho$ for all $\sblank n \rho\in\Blanks$.
We call all symbols
in $\Blanks$ \emph{blank symbols} (or just \emph{blanks}), and abbreviate
$\sblank n \rho$ with $\rho = 1\cdots n$ as
$\emptyseq^{(n)}$.
Note that $\Blanks\subseteq\AllTerms_\Voc$ is infinite and independent of $\Voc$.

We note the following immediate consequences of the definitions above:

\vspace{0mm plus5mm}
\pagebreak[2]
\begin{fact}\label{a:powerset-assumptions}
The following holds for every ranked alphabet $\Voc$ and its canonical alphabet $\AllTerms_\Voc$:\nopagebreak
	\begin{enumerate}
		\item \label{as:closure}
		      Let $a,b\in\AllTerms_\Voc$ with $\type(a)=(m,n)$ and $\type(b)=(n,k)$
		      (for some $m,n,k\in\Nat$) be such that $\{a,b\}\cap\Blanks\neq\emptyset$.
		      Then $\AllTerms_\Voc$~contains a symbol $c$ such that $ab\sim c$. In
		      particular, $\type(c)=(m,k)$. (Note that $c\in\Blanks$ if both $a$
		      and $b$ are blanks, and $c\in\AllTerms_\Voc\setminus\Blanks$
		      otherwise.)
		\item \label{as:identity}For all $m,n\in\Nat$ and $a\in\AllTerms_\Voc$ with
		      $\type(a)=(n,m)$, we have $\idblank{n} a \sim a\sim a\,\idblank{m}$.
	\end{enumerate}
\end{fact}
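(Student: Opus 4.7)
The plan is to dispatch both items by routine case analysis that merely unpacks the definitions of $\sem\cdot$ and $\gcomp$. Since a composition $\sem{a}\gcomp\sem{b}$ has $|\ed{\sem a}|+|\ed{\sem b}|$ edges, the condition $\{a,b\}\cap\Blanks\neq\emptyset$ in part~(1) guarantees that $\sem{a}\gcomp\sem{b}$ has at most one edge, i.e.\ it is either a blank graph or an atom graph, so by construction of $\AllTerms_\Voc$ there is a symbol $c\in\AllTerms_\Voc$ realising it. The work is thus just to exhibit the index data of $c$.

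For part~(1) I would split into three cases. If $a=\sblank{m}{\sigma}$ and $b=\sblank{n}{\tau}$ are both blanks, then $\sem a\gcomp\sem b$ is discrete with $m$ nodes; after identifying rear of $\sem a$ with front of $\sem b$ as in \defref{d:graphops}, its rear sequence is obtained by reading $\tau$ through $\sigma$, so $c=\sblank{m}{\sigma(\tau(1))\cdots\sigma(\tau(k))}\in\Blanks$. If $a$ is a blank and $b=b'^{\phi}_{\rho}$ an atom (or symmetrically), the composition still has exactly one edge, and tracing how the gluing rewrites the front, rear and attachment sequences of that edge yields an explicit atom symbol $c=c'^{\phi'}_{\rho'}\in\AllTerms_\Voc\setminus\Blanks$ with $\type(c)=(m,k)$. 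The type equality follows either directly from the above or from the general fact that composition of graphs of types $(m,n)$ and $(n,k)$ yields type $(m,k)$.

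For part~(2), observe that $\sem{\idblank{n}}=\gblank{n}{1\cdots n}$ is a discrete graph whose $n$ nodes lie simultaneously in its front and its rear interface, in the same order. The plan is to choose the isomorphic copy of $\sem{\idblank{n}}$ whose node set equals the set of front interface nodes of $\sem a$, with front and rear both equal to $\front_{\sem a}$. Then \defref{d:graphops} immediately gives $\sem{\idblank{n}}\gcomp\sem a=\sem a$ as concrete graphs. The case $a\cdot\idblank{m}$ is entirely symmetric, using the isomorphic copy of $\sem{\idblank{m}}$ sitting on the rear nodes of $\sem a$.

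The only real obstacle is keeping the index bookkeeping in part~(1) straight, in particular making sure that when an atom is composed with a blank the resulting atom still satisfies the constraint $[\phi']\cup[\rank(c')]=[n']$ on its node set; since blanks only reindex and possibly duplicate interface nodes without introducing or removing edges, this is guaranteed, but it is worth stating explicitly so that the exhibited $c$ is genuinely in $\AllTerms_\Voc$.
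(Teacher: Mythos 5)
Your proposal is correct: the paper states this Fact without proof, calling it an ``immediate consequence of the definitions,'' and your case analysis (both blanks compose to a blank via $\sigma\circ\tau$; a blank composed with an atom leaves exactly one edge and preserves the atom condition that every node is a front node or attached to the edge; identity blanks glue trivially onto the front or rear interface) is precisely the routine verification being alluded to. One tiny inaccuracy in your closing remark: blanks never \emph{duplicate} interface nodes, since all interface sequences are repetition-free by definition --- they only select and reorder them --- but this does not affect the argument.
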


By requiring that finite automata respect types, we obtain finite automata over a
finite typed alphabet $\GVoc$.
\begin{defn}[Finite Automaton]
  Let $\GVoc$ be a finite typed alphabet.
  A \emph{finite automaton} over $\GVoc$ is a tuple
  $\aut A = (\GVoc, Q, \Delta, q_0, F)$ such that $Q$ is a ranked
  alphabet of \emph{states},
  $\Delta \subseteq \left( Q \times \GVoc \times Q \right)$ is a set
  of \emph{transitions} $(q,a,q')$ such that $\type(a)=(i,j)$ implies
  that $\rank(q) = i$ and $\rank(q') = j$, $q_0 \in Q$ is the
  \emph{initial state}, and $F \subseteq Q\setminus\{q_0\}$ is a
  non-empty set of \emph{final states} such that $\rank(q)=\rank(q')$
  for all $q,q'\in F$.
           
  We let $\type(\aut A)=(m,n)$ where $m=\rank(q_0)$ and $n=\rank(q)$
  for all $q\in F$.  $\aut A$ is \emph{deterministic} if, for each
  pair of transitions $(p,a,q), (p',b,q') \in \Delta$, $p=p'$ and
  $a = b$ implies $q = q'$.

  A \emph{configuration} of $\aut A$ is a pair
  $(q,w)\in Q\times\GVoc^\ast$ consisting of the \emph{current state}
  $q$ and the \emph{remaining input} $w$.  It is \emph{initial} if
  $q = q_0$, and \emph{accepting} if $q \in F$ and $w = \emptyseq$.
  Note that configurations are defined for arbitrary $w\in\GVoc^\ast$,
  i.e., $w$ is not necessarily typed.

  A transition $\delta = (q, a, q') \in \Delta$, where $a$ is a symbol
  in $\GVoc$, defines \emph{moves} $(q,a w) \move_\delta (q', w)$ for
  all strings $w \in \GVoc^*$.  We write
  $(q,w) \move_{\aut A} (q', w')$ if $(q,w) \move_\delta (q', w')$ for
  some transition $\delta \in \Delta$.

  The \emph{language} accepted by $\aut A$ is defined as usual:
  \begin{align*}
    \L(\aut A) = \{ w \in \GVoc^*
    \mid \exists q \in F : (q_0,w) \move_{\aut A}^*(q,\emptyseq)\}.
  \end{align*}
\end{defn}
  Note that $\L(\aut A)\subseteq\GVoc^\oplus$ because transitions are typed and $q_0\notin F$.
  In fact, we have $\type(w)=\type(\aut A)$ for all $w\in\L(\aut A)$.
  
Associating a graph interpretation operator with $\GVoc$, we can
  recognize graph languages in the obvious way: a graph $G$ is
  accepted by an automaton $\aut A$ if $G\iso\sem w$ for some string
  $w\in\L(\aut A)$. To enable our automata
  to work directly on graphs, we use \emph{graph configurations} and
  the corresponding moves:

\begin{defn}[Graph configurations and moves]\label{d:graph-conf}%
	Let $\aut A = (\GVoc, Q, \Delta, q_0, F)$ be a finite
	automaton with $\GVoc\subseteq\AllTerms_\Voc$. A pair $(q,G)$ consisting of a
	state $q\in Q$ and a graph $G\in\GG_\Voc$ is a \emph{graph configuration}, or simply
	\emph{configuration} if the context prevents confusion.

	A transition $\delta=(q,a,q')\in\Delta$ can be applied to a
	graph $G\in\GG_\Voc$ if there are graphs $G',G_a\in\GG_\Voc$ such that
	$G = G_a \gcomp G'$ and $G_a\iso\sem a$. We then call $(q,G)
		\gmove_\delta (q',G')$ a \emph{move} (using $\delta$). We write $(q,G)
		\gmove_{\aut A} (q',G')$ if $(q,G) \gmove_\delta (q',G')$ for some
	transition $\delta\in\Delta$. The set of \emph{acceptable
	configurations} of $\aut A$ with $\type(\aut A)=(m,n)$ is 
	$$\CorrectConf{\aut A}=\{(q,G)\in Q\times\GG_\Voc\mid 
	\exists q'\in F,\,G'\in\GG_\Voc\colon (q,G) \gmove_{\aut A}^\ast
		(q',G')\text{ and }G'\iso\gblank n {1\ldots n}\}.$$
	The graph language accepted by $\aut A$ is then
	\[\L_G(\aut
		A)=\{G\in\GG_\Voc\mid (q_0,G)\in\CorrectConf{\aut A}\}.\]
\end{defn}

Obviously (and provable by a straightforward induction), we have
$\L_G(\aut A)=\{\sem w\mid w\in\L(\aut A)\}$. However, given a graph
$G\in\L_G(\aut A)$, it is usually not the case that $w\in\L(\aut A)$
for all $w\in\GVoc^\oplus$ such that $\sem w=G$, as demonstrated in 
the following example. This turns efficient graph
recognition with finite automata into a nontrivial problem.

\begin{figure}[tb]
  \centering
  \begin{automaton}[node distance=6mm]
       \node[state,initial] (q_0) {$q_0$};
        \node[state] (q_1) [right=of q_0] {$q_1$};
        \node[state,accepting](q_2) [right=of q_1] {$q_2$};
        \path[->]
          (q_0) edge[loop above] node {$a^1_1$} ()
                  edge                    node[above] {$a^1_1$} (q_1)
          (q_1) edge                   node[above] {$b^1_2$} (q_2)
    ;
  \end{automaton}
   \caption{The finite automaton $\aut S$ recognizing the graph language of ``stars''.}
   \label{f:star-auto}
\end{figure}
\begin{example}\label{ex:star-auto}
	Let us consider the graph language of ``stars'' where each ``star'' 
	consists of a center node, which is the only front interface node, 
	and at least two satellite nodes, which are connected by binary 
	edges with the center node. Just one of the edges is labeled with~$b$, 
	the others with~$a$, and the rear interface consists of just the 
	satellite node attached to the $b$-labeled edge. 
	\figref{f:atoms-blank} shows such a ``star'', and \figref{f:star-auto}  
	the finite automaton $\aut S$ recognizing this graph 
	language. As usual, we draw finite automata with circles as states  
	and arrows as transitions. The initial state is indicated by an 
	incoming arrow from nowhere, final states by double borders.  
	$\aut S$ is nondeterministic since it contains a transition 
	$(q_0,a^1_1,q_0)$ as well as $(q_0,a^1_1,q_1)$.
	
	One can see that the ``star'' $G$ in \figref{f:atoms-blank} is a member 
	of $\L_G(\aut S)$ because the string $w=a^1_1\,a^1_1\,b^1_2$ 
	is recognized by $\aut S$ and $G\iso\sem w$. Note
	that $\sem w\iso\sem{a^1_1}\gcomp\sem{a^1_1}\gcomp\sem{b^1_2} =
		\gatom{a}{1}{1}\gcomp\gatom{a}{1}{1}\gcomp\gatom{b}{1}{2}$, 
	which is one of the compositions of $G$ shown in 
	\exref{ex:composition}, whereas, e.g.,  the
        string $b^{1}_{12}\,a^{13}_{123}\,a^{134}_{1234}\,\sblank{4}{4}$
	cannot be recognized by $\aut S$.
\end{example}

\section{Efficient Graph Recognition with Finite Automata}
\label{s:pfga}
Let us now use a finite automaton for the recognition of graphs.
To achieve efficiency, we must avoid
nondeterminism. To see this, consider a nondeterministic automaton
and a situation where we have reached a configuration with a state that has
several outgoing transitions reading the same symbol. This means that all of
them are applicable whenever one of them can be applied. Consequently, we must
try one of them first, and if this choice leads into a dead end later, we are
forced to backtrack and then try the next one. Backtracking usually leads to
exponential running times, and should thus be avoided for efficient recognition. 

Note that blank transitions, that is, transitions that read a blank,
can lead to nondeterminism, even if the automaton is in fact
deterministic. To understand this, consider a state with at least two outgoing
transitions, one of them a blank transition. Similarly to an epsilon transition
in an ordinary finite automaton, a blank transition (of the right type) can
always be applied. So whenever one of the other transitions is applicable, one
must choose between that transition and the blank transition. If one picks the
``wrong'' one that leads into a dead end, backtracking is necessary in order to
choose the other one.

For efficient graph recognition with finite automata, we first 
develop an extension of the well-known powerset construction to
turn a finite automaton over a subset of the canonical alphabet
into a deterministic one without blank transitions in places where they may
trigger backtracking.

\subsection{Powerset Construction for Finite Automata\label{sec:powerset automaton}}

\paragraph*{General Assumption.} Throughout this
subsection, we consider a fixed (ranked) alphabet $\Voc$,
its canonical alphabet $\AllTerms_\Voc$, and a finite automaton 
$\aut A = (\GVoc, Q, \Delta, q_0, F)$ over a typed alphabet
$\GVoc\subset\AllTerms_\Voc$. We also let $(m,n)=\type(\aut A)$.

\begin{algorithm}[tb]
	\caption{Powerset construction for a finite automaton.}
	\label{alg:powerset}
	\Input{Finite automaton $\aut A = (\GVoc, Q, \Delta, q_0, F)$ with $\GVoc\subset\AllTerms_\Voc$.}
	\Output{Powerset automaton $\aut A' = (\GVoc', Q', \Delta', S_0, F')$.}
	$S_0 \leftarrow \closure{q_0}$ and $S_f\leftarrow\emptyset$\;
	$Q' \leftarrow  \{S_0,S_f\}$ and $F'\leftarrow\{S_f\}$ where $(\rank(S_0),\rank(S_f))=\type(\aut A)$\;\label{a:initial}
	$\Delta' \leftarrow \emptyset$\;
	$\VWork \leftarrow \{S_0\}$\;\label{a:initial-W}
	\While{$\VWork \neq \emptyset$\label{a:while-loop}}{
		select and remove any $X$ from $\VWork$\;\label{a:select-Q}
		\lForEach{$(\beta,q)\in X$ such that $q\in F$}{
			add $(X,\beta,S_f)$ to $\Delta'$\label{a:add-final-prod}
		}
		$\psi \leftarrow \{(a, q') \in (\AllTerms\setminus\Blanks)\times Q\mid\text{$(\beta,q) \in X$, $(q,b,q')\in\Delta$, and $a\sim \beta b$}\}$\;\label{a:compute-M}
		\ForEach{$a_0\in\domain\psi$\label{a:select-a}}{
			let $\type(a_0)=(i,j)$\;
			$Y \leftarrow \closure{\psi(a_0)}$\;\label{a:compute-Q'}
			\If{$Y=\beta'Y'$ for some $Y'\in Q'$ and $\beta'\in\Blanks$ with $\type(\beta')=(j,j)$\label{a:Q'-case-2}}%
			{
				add $(X,a_1,Y')$ to $\Delta'$ for some $a_1\in\Terms\setminus\Blanks$ such that $a_1\sim a_0\beta'$\;\label{a:add-transition-1}
			}
			\Else{\label{a:Q'-case-3}
				let $\rank(Y)=j$ and add $Y$ to $Q'$ as well as to $\VWork$\;\label{a:expand-nonterms}
				add $(X,a_0,Y)$ to $\Delta'$\;\label{a:add-transition-2}
			}
		}
	}
	$\GVoc'\leftarrow\{a\in\AllTerms_\Voc\mid\exists S,S''\in Q':(S,a,S')\in\Delta'\}$
\end{algorithm}
\algref{alg:powerset} extends the well-known powerset construction and computes a new 
automaton whose states are built from sets of states of the input automaton. As mentioned
above, blank 
transitions resemble epsilon transitions that ``read'' the empty word. 
However, blanks are more complicated because they must take 
nodes into account; remember that the rear interface of a blank can contain a subset 
of its nodes in any order (cf.~\figref{f:atoms-blank}). Therefore, 
\algref{alg:powerset} considers pairs of the form $(\beta,q)$ where $\beta$ is a 
blank and $q$ a state of the input automaton. Informally speaking, such a pair 
indicates that one can reach state $q$ after reading $\beta$. Note that $\beta$ 
can be $\idblank{i}$ for an appropriate $i$, that is, an identity without any effect 
(which is comparable to $\emptyseq$ in the string case). The states of the computed 
automaton then consist of sets of such pairs. 

When building these sets, the algorithm must follow any sequence of blank transitions 
and combine their blanks into a single blank, which is always possible thanks to 
\factref{a:powerset-assumptions}. We call the set of all states (together with 
these blanks) that can be reached from a state $q\in Q$ by a sequence of blank transitions
the \emph{closure} of $q$. It is defined as
$$
	\closure{q}=\{(\beta,q')\in\Blanks\times Q\mid
	\exists \sigma\in\Blanks^\ast\colon
	(q,\sigma)\move_{\aut A}^\ast(q',\emptyseq)\text{ and }\idblank{i}\beta\sim\idblank{i}\sigma\}
$$
where $\rank(q)=i$. The condition $\idblank{i}\beta\sim\idblank{i}\sigma$ in this definition can be
simplified to $\beta\sim\sigma$ unless $\sigma=\emptyseq$. If $\sigma=\emptyseq$
then $\idblank{i}$ is required because $\sem\emptyseq$ is undefined. Note also
that $(\idblank{i},q)\in\closure q$, and that $(\beta,q')\in\closure q$ implies
$\type(\beta)=(i,\rank(q'))$.

We extend the definition of closures 
to sets of states by defining $\closure M=\bigcup_{q\in M}\closure q$ for all
$M\subseteq Q$ as used in \alineref{a:compute-Q'}.
For every set $C\subseteq\Blanks\times Q$ and $i\in\Nat$ such that $(\beta',q)\in
	C$ implies $\type(\beta')=(i,\rank(q))$, and every blank $\beta\in\Blanks$ of
type $(i,i)$, we define
\[
	\beta C = \{(\beta'',q)\in\Blanks\times Q\mid\text{$\exists\beta'\in\Blanks\colon(\beta',q)\in C$ and $\beta''\sim \beta\beta'$}\}
\]
as used in \alineref{a:Q'-case-2}. Note that a blank
$\beta''$ with $\beta''\sim \beta\beta'$ always exists, thanks to
\factsubref{a:powerset-assumptions}{as:closure}.

We now prove the correctness of \algref{alg:powerset} by first showing that it
always terminates and produces a proper finite automaton as a result. After
that, we will show that the produced powerset automaton is equivalent to the
input automaton. 

\begin{lemma}
	\algref{alg:powerset} terminates for every finite automaton $\aut A =
		(\GVoc, Q, \Delta, q_0, F)$ and outputs a finite
	automaton.
\end{lemma}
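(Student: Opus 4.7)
The plan is to bound $|Q'|$ a priori, argue that each state enters $\VWork$ at most once, and finally verify the routine finite-automaton properties of the result.

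The key structural observation is that every state ever added to $Q'$ apart from $S_f$ takes the form $\closure M$ for some subset $M \subseteq Q$: the initial state is $S_0 = \closure{q_0}$ (\alineref{a:initial}), and every subsequent addition in \alineref{a:expand-nonterms} produces $Y = \closure{\psi(a_0)}$ with $\psi(a_0) \subseteq Q$, by inspection of the definition of $\psi$ in \alineref{a:compute-M}. Since $Q$ is finite, at most $2^{|Q|}$ distinct such closures exist, so $|Q'| \leq 2^{|Q|}+1$.

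For termination, I would show that the check in \alineref{a:Q'-case-2} prevents re-entering a state into $\VWork$. If $Y$ is already in $Q'$, pick $Y' = Y$ and $\beta' = \idblank{j}$: by \factsubref{a:powerset-assumptions}{as:identity} together with the fact that two syntactically distinct blanks of the same type are non-equivalent (since $\sem{\sblank n\rho} = \gblank n\rho$ is determined by $(n,\rho)$), we have $\idblank{j} Y = Y$, so $Y = \beta' Y'$ as required. Hence the then-branch fires, $Y$ is not re-added to $\VWork$, and each state is processed at most once; the while loop runs at most $|Q'|$ iterations.

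Finally, to confirm that the output is a finite automaton, I would verify that each closure $\closure q$ is finite: blanks of any fixed type $(i,k)$ are in bijection with the repetition-free sequences $\rho \in [i]^*$ of length $k$, and the ranks arising in any reachable pair are bounded by $\max\{\rank(p) \mid p \in Q\}$, so only finitely many pairs $(\beta,q') \in \closure q$ exist. Consequently each state of $Q'$ is a finite set of pairs, each iteration adds only finitely many transitions to $\Delta'$, and both $\Delta'$ and the induced alphabet $\GVoc'$ are finite. The rank compatibility on transitions added in \alineref{a:add-final-prod}, \alineref{a:add-transition-1}, and \alineref{a:add-transition-2} follows by routine inspection; the one nontrivial point is existence of the symbol $a_1 \in \AllTerms_\Voc$ in \alineref{a:add-transition-1}, which is guaranteed by \factsubref{a:powerset-assumptions}{as:closure} since $\beta' \in \Blanks$. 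The main obstacle is recognizing the ``revisit-prevention'' trick via $\idblank{j}$ in \alineref{a:Q'-case-2}; once that is in hand, termination and finiteness reduce to careful bookkeeping.
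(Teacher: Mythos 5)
Your proposal is correct and follows essentially the same route as the paper's proof: bound $Q'$ by finiteness, observe that no state enters $\VWork$ twice, and verify the typing of the added transitions case by case. In fact you are slightly more careful than the paper on two points it leaves implicit, namely why only finitely many blanks can occur in closures and why the test in \alineref{a:Q'-case-2} (via $Y=\idblank{j}Y$) guarantees that an already-present state is never re-added to $\VWork$.
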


\begin{proof}
	We use the notations established in \algref{alg:powerset}. In particular,
	$\aut A' = (\GVoc', Q', \Delta', S_0, F')$ is the output
        of the  algorithm.
	
	Every state in $Q'$ is a subset of $\Blanks\times Q$. The fact that $Q$ and
	$\Terms$ are finite thus implies that $Q'$ is finite. Since no element of
	$Q'$ is added to $\VWork$ more than once, none of the loops of
	\algref{alg:powerset} can run endlessly.

	It remains to be shown that the transitions of $\aut A'$ are correctly typed.
	For this, we consider any transition
	$\delta=(S,a,S')\in\Delta'$ and show that $\type(a) =
        (\rank(S),\rank(S'))$.
	Transition $\delta$ must have been added in \alinesref{a:add-final-prod},
	\ref{a:add-transition-1}, or~\ref{a:add-transition-2}. Before we consider
	each of these cases, first note that each state $S\neq S_f$ is a non-empty
	set of pairs $(\beta,q)\in\Blanks\times Q$ that satisfy
	$\type(\beta)=(\rank(S),\rank(q))$ by the definition of closures, how $\psi$
	is computed in \alineref{a:compute-M}, and how $\rank(S)$ is set in
	\alinesref{a:initial} and~\ref{a:expand-nonterms}. Further note that we have
	$\rank(X)=i$ when \alineref{a:Q'-case-2} is reached, also by the definition
	of closures and how $\psi$ is computed. We now distinguish the three cases
	where $\delta$ may have been added to $\Delta'$:
	\begin{itemize}
		\item \alineref{a:add-final-prod}: Since $(\beta,q)\in X$ and $q\in
			      F$, $\type(\beta)=(\rank(X),\rank(q))=(\rank(X),\rank(S_f))$
		\item \alineref{a:add-transition-1}: $\rank(Y')=j$ and
		      $\type(a_1)=(i,j)$ follow from $Y=\beta'Y'$, $\type(\beta')=(j,j)$, and
		      $\type(a_0)=(i,j)$, and hence $\type(a_1)=(\rank(X),\rank(Y'))$.
		\item \alineref{a:add-transition-2}: $\rank(Y)$ is set to $j$, and hence
		      $\type(a_0)=(\rank(X),\rank(Y))$,
	\end{itemize}
	Hence, the transitions in $\Delta'$ are correctly typed, i.e., $\aut A'$ is a
	finite automaton.
\end{proof}

To prove that the powerset automation $\aut A'$ produced by
\algref{alg:powerset} is equivalent to its input automaton~$\aut A$, we now show
that every accepting sequence of moves in $\aut A$ has a corresponding sequence
of moves in $\aut A'$ (\lemmaref{lemma:Gamma-to-Gamma'}) and vice versa
(\lemmaref{lemma:Gamma'-to-Gamma}). For proving
\lemmaref{lemma:Gamma-to-Gamma'}, we will need the following auxiliary result.
Informally speaking (and ignoring the fact that we are actually dealing with
pairs $(\beta,q)$ instead of just states $q$), \lemmaref{l:closure} proves that
every state reachable from a closure state  by a blank transition is also
contained in the closure: 

\begin{lemma}\label{l:closure}%
	Let $q,q',q''\in Q$, $\beta,\beta',\beta''\in\Blanks$, and
	$\delta=(q',\beta',q'')\in\Delta$. If $(\beta,q')\in\closure q$ and
	$\beta''\sim \beta\beta'$ then $(\beta'',q'')\in\closure q$.
\end{lemma}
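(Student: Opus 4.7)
The plan is to unfold the definition of the closure, append the blank transition $\delta$ to an existing computation witnessing $(\beta,q') \in \closure q$, and then show that the resulting computation witnesses $(\beta'',q'') \in \closure q$ by a short congruence argument.

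First, set $i = \rank(q)$. By $(\beta,q') \in \closure q$ there exists $\sigma \in \Blanks^\ast$ with $(q,\sigma) \move_{\aut A}^\ast (q',\emptyseq)$ and $\idblank{i}\beta \sim \idblank{i}\sigma$. Because $\delta=(q',\beta',q'') \in \Delta$ gives the single-step move $(q',\beta') \move_\delta (q'',\emptyseq)$, I can concatenate the computations to obtain $(q,\sigma\beta') \move_{\aut A}^\ast (q'',\emptyseq)$. Setting $\sigma' := \sigma\beta' \in \Blanks^+$, it therefore suffices to establish the congruence $\idblank{i}\beta'' \sim \idblank{i}\sigma'$ in order to conclude $(\beta'',q'') \in \closure q$.

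For the congruence, the hypothesis $\beta'' \sim \beta\beta'$ yields $\idblank{i}\beta'' \sim \idblank{i}\beta\beta'$ (by compatibility of $\sim$ with concatenation, which holds since $\sim$ is defined as isomorphism of the interpreted graphs and graph composition is associative up to isomorphism). From $\idblank{i}\beta \sim \idblank{i}\sigma$ one likewise gets $\idblank{i}\beta\beta' \sim \idblank{i}\sigma\beta' = \idblank{i}\sigma'$. Chaining the two equivalences gives $\idblank{i}\beta'' \sim \idblank{i}\sigma'$, as required.

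The only mildly delicate point is a typing/edge-case check: one has to ensure that all the concatenations involved are well-typed (so that $\sim$ relates typed strings) and that the case $\sigma = \emptyseq$ does not cause trouble. The first is immediate from $(\beta,q') \in \closure q$, which forces $\type(\beta)=(i,\rank(q'))$, together with $\type(\beta')=(\rank(q'),\rank(q''))$ coming from $\delta \in \Delta$ and the fact that transitions in $\aut A$ are typed. The second is harmless because $\sigma' = \sigma\beta'$ is nonempty regardless of $\sigma$, so the padding by $\idblank{i}$ in the closure condition is never essential on the right-hand side. This is the main obstacle, but it amounts only to bookkeeping — no further use of \factref{a:powerset-assumptions} is required beyond what is implicit in forming the blank $\beta''$ with $\beta'' \sim \beta\beta'$.
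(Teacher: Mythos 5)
Your proof is correct and follows exactly the same route as the paper's: unfold the definition of $\closure q$ to obtain $\sigma$, append the move $(q',\beta')\move_\delta(q'',\emptyseq)$ to get $(q,\sigma\beta')\move_{\aut A}^\ast(q'',\emptyseq)$, and chain $\idblank{i}\sigma\beta'\sim\idblank{i}\beta\beta'\sim\idblank{i}\beta''$. The additional typing and $\sigma=\emptyseq$ bookkeeping you include is harmless and consistent with the paper's definitions.
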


\begin{proof}
	Let $\rank(q)=i$. By the definition of $\closure q$, $(\beta,q')\in\closure
		q$ implies $(q,\sigma)\move_{\aut A}^\ast(q',\emptyseq)$ and
	$\idblank{i}\beta\sim\idblank{i}\sigma$ for some $\sigma\in\Blanks^\ast$.
	Consequently, $(q,\sigma\beta')\move_{\aut
			A}^\ast(q',\beta')\move_\delta(q'',\emptyseq)$ and $\idblank{i}\sigma\beta'
		\sim \idblank{i}\beta\beta' \sim \idblank{i}\beta''$, and thus
	$(\beta'',q'')\in\closure q$.
\end{proof}

\begin{lemma}\label{lemma:Gamma-to-Gamma'}%
	Consider $q\in Q$ and $w\in\Terms^\ast$ such that $(q_0,w)\move_{\aut
			A}^\ast(q,\emptyseq)$. Then there exist $w'\in\Terms^\ast$, $S\in Q'$, and
	$\beta \in\Blanks$ such that $(S_0,w')\move_{\aut A'}^\ast(S,\emptyseq)$,
	$(\beta,q)\in S$, and $\idblank{m}w\sim \idblank{m}w'\beta$.\footnote{Recall that $\type(\aut A)=(m,n)$.}
\end{lemma}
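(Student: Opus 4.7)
I will proceed by induction on the number $r\geq 0$ of non-blank transitions used in the given computation $(q_0,w)\move_{\aut A}^\ast(q,\emptyseq)$. This choice mirrors the design of \algref{alg:powerset}, which enumerates precisely the non-blank transitions of $\aut A$ while absorbing adjacent blanks into the source state (through closures) and into the symbol labelling the new transition (through the $\beta'Y'$ decomposition at \alineref{a:Q'-case-2}).

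For the base case $r=0$, the string $w$ is a (possibly empty) sequence of blanks and the entire computation consists of blank transitions out of $q_0$. By the definition of closure there is a blank $\beta$ with $(\beta,q)\in\closure{q_0}=S_0$ such that $\idblank{m}\beta\sim\idblank{m}w$; if $w=\emptyseq$ one takes $\beta=\idblank{m}$ and invokes \factref{a:powerset-assumptions}(2). Choosing $w'=\emptyseq$ and $S=S_0$ then settles the claim via the empty computation of $\aut{A'}$.

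For the inductive step $r\geq 1$, I will isolate the \emph{last} non-blank transition $\delta=(p,b,q')\in\Delta$, writing $w=w_1\,b\,w_2$ with $w_2\in\Blanks^\ast$, the prefix $(q_0,w_1)\move_{\aut A}^\ast(p,\emptyseq)$ using $r-1$ non-blank transitions, and $(q',w_2)\move_{\aut A}^\ast(q,\emptyseq)$ using only blanks. The IH applied to the prefix yields $w_1'$, $S_1$, $\beta_1$ with $(\beta_1,p)\in S_1$, $(S_0,w_1')\move_{\aut{A'}}^\ast(S_1,\emptyseq)$, and $\idblank{m}w_1\sim\idblank{m}w_1'\beta_1$. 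From the suffix, the closure definition supplies a blank $\beta_2$ with $(\beta_2,q)\in\closure{q'}$ and $\idblank{j}w_2\sim\idblank{j}\beta_2$, where $j$ is the rear rank of $b$. Since $(\beta_1,p)\in S_1$ and $b\in\AllTerms_\Voc\setminus\Blanks$, when the algorithm processed $X=S_1$ it enumerated at \alineref{a:compute-M} a symbol $a_0\sim\beta_1 b$ with $q'\in\psi(a_0)$, and hence $(\beta_2,q)\in\closure{q'}\subseteq Y=\closure{\psi(a_0)}$.

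I will then distinguish the two branches of the algorithm: in the \emph{else} branch a transition $(S_1,a_0,Y)$ is added to $\Delta'$, and I take $S=Y$, $w'=w_1'a_0$, $\beta=\beta_2$; in the $Y=\beta'Y'$ branch the algorithm adds $(S_1,a_1,Y')$ with $a_1\sim a_0\beta'$, and the definition of $\beta' Y'$ applied to $(\beta_2,q)\in Y$ produces some $(\beta_2',q)\in Y'$ with $\beta_2\sim\beta'\beta_2'$, so I take $S=Y'$, $w'=w_1'a_1$, $\beta=\beta_2'$. In both cases the required congruence $\idblank{m}w\sim\idblank{m}w'\beta$ is a short chain of $\sim$-rewrites using the IH identity $\idblank{m}w_1\sim\idblank{m}w_1'\beta_1$, the closure identity $\idblank{j}w_2\sim\idblank{j}\beta_2$, the defining relation $a_0\sim\beta_1 b$ (and, in Case~1, $a_1\sim a_0\beta'$), together with \factref{a:powerset-assumptions}. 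The main obstacle I expect is the careful bookkeeping of boundary corner cases in which $w_1$, $w_2$, or the intermediate sequence $\sigma$ from the closure definition is empty: typed concatenation is then undefined, so explicit $\idblank{i}$ factors must be inserted and the congruences manipulated via \factref{a:powerset-assumptions}(2) rather than by plain cancellation, and one must further check that the particular $\beta_2'$ extracted for our $(\beta_2,q)\in Y$ is compatible with the $\beta'$ fixed by the algorithm.
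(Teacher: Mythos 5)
Your proof is correct, and its skeleton matches the paper's: an induction that peels off the end of the run, followed by the same case analysis on the two transition-adding branches of \algref{alg:powerset} (\alineref{a:Q'-case-2} versus \alineref{a:Q'-case-3}) and essentially the same chains of $\sim$-congruences. The one genuine organizational difference is your induction measure. The paper inducts on the total number of moves and therefore must treat a trailing blank transition as a separate case, which it discharges with the auxiliary \lemmaref{l:closure} (closures are closed under appending one blank transition). You instead induct on the number of non-blank transitions and absorb the entire trailing block $w_2\in\Blanks^\ast$ in one step by appealing directly to the definition of $\closure{q'}$; this makes \lemmaref{l:closure} unnecessary for this lemma, at the price of a slightly heavier base case (an arbitrary all-blank run out of $q_0$ rather than the empty run). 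Both routes rely on the same facts --- that every state of $Q'$ is eventually selected as $X$, that $a_0\sim\beta_1 b$ exists by \factsubref{a:powerset-assumptions}{as:closure}, and that $(\beta_2,q)\in Y=\beta'Y'$ yields some $(\beta_2',q)\in Y'$ with $\beta_2\sim\beta'\beta_2'$ --- and your handling of the empty-string corner cases via $\idblank{j}$ insertions and \factref{a:powerset-assumptions} is exactly what is needed. Since \lemmaref{l:closure} is not used elsewhere in the paper, your variant would in fact let one drop it altogether.
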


\begin{proof}
	We prove the proposition by induction over the length $\ell$ of the move sequence
	$(q_0,w)\move_{\aut A}^\ast(q,\emptyseq)$. 
	The proposition is clearly true for $\ell=0$ (with $w=w'=\emptyseq$, $S=S_0$, and
	$\beta=\idblank{m}$). For $\ell>0$, we consider the sequence
	$(q_0,wa)\move_{\aut A}^{\ell-1}(q,a)\move_\delta(q',\emptyseq)$ using
	transition $\delta=(q,a,q')\in\Delta$. By the induction hypothesis,
	there is a sequence $(S_0,w')\move_{\aut A'}^\ast(S,\emptyseq)$ with
	$(\beta,q)\in S$ and $\idblank{m}w\sim \idblank{m}w'\beta$. We have to
	show that there is also a sequence $(S_0,w'a')\move_{\aut
			A'}^\ast(S',\emptyseq)$ with $(\gamma,q')\in S'$ and $\idblank{m}wa\sim
		\idblank{m}w'a'\gamma$. We distinguish two cases:
	\begin{itemize}
		\item $a\in\Blanks$: By \alinesref{a:initial}, \ref{a:compute-Q'}, and
		      \ref{a:expand-nonterms} of \algref{alg:powerset},
		      $(\beta,q)\in\closure{q''}\subseteq S$ for some $q''\in Q$. Let
		      $\beta'\in\Blanks$ such that $\beta'\sim \beta a$. We have
		      $(\beta',q')\in\closure{q''}\subseteq S$ by \lemmaref{l:closure}
		      and $\idblank{m}wa\sim \idblank{m}w'\beta a \sim
			      \idblank{m}w'\beta'$ by the induction hypothesis and the
		      definition of $\beta'$. This proves the proposition by choosing
		      $a'=\emptyseq$ and $\gamma =\beta'$.
		\item $a\in\Terms\setminus\Blanks$: Whenever an element is added to
		      $Q'$, it is also added to $\VWork$ (\alineref{a:initial-W}
		      and~\alineref{a:expand-nonterms}). Since $S\in Q'$ and
		      $\VWork=\emptyset$ when \algref{alg:powerset} terminates, $S$ must have
		      been selected as $X$ in \alineref{a:select-Q} at some point. As
		      $(\beta, q)\in S$ and $\delta=(q,a,q')\in\Delta$, there is a
		      $b\in\Terms\setminus\Blanks$ such that $b\sim
			      \beta a$, $(b,q')\in \psi$ after executing \alineref{a:compute-M},
		      and $(\idblank{j},q')\in Y$ after executing
		      \alineref{a:compute-Q'}, and after selecting $b$ as $a_0$ in
		      \alineref{a:select-a}. $Y$ is handled in one of two sub-cases
		      selected in \alinesref{a:Q'-case-2} and~\ref{a:Q'-case-3}:
		      \begin{itemize}
			      \item \alineref{a:Q'-case-2}, i.e., $Q'$ already contains a
			            similar nonterminal $Y'$ such that $Y=\beta'Y'$ for some
			            blank $\beta'\in\Blanks$ with $\type(\beta')=(j,j)$, and
			            $\delta'=(X,b',Y')=(S,b',Y')$ is added to~$\Delta'$
			            where $b'\in\Terms\setminus\Blanks$ and $b'\sim
				            b\beta'=a_0\beta'$. Consequently, $(S_0,w'b')\move_{\aut
					            A'}^\ast(S,b')\move_{\delta'}(Y',\emptyseq)$. Since
			            $(\idblank{j},q')\in Y$, there exists $\beta''\in\Blanks$
			            such that $(\beta'',q')\in Y'$ and $\idblank{j}\sim
				            \beta'\beta''$. Therefore,
			            \[
				            \idblank{m}wa \sim \idblank{m}w'\beta a \sim
				            \idblank{m}w'b \sim \idblank{m}w'b\idblank{j}
				            \sim\idblank{m}w'b\beta'\beta'' \sim
				            \idblank{m}w'b'\beta''
			            \]
			            by the induction hypothesis, the construction of $b$ and
			            $b'$, and the fact that $\idblank{j}\sim \beta'\beta''$.
			            This proves the proposition by choosing $a'=b'$, $\gamma
				            =\beta''$, and $S'=Y'$.
			      \item \alineref{a:Q'-case-3}, i.e., $Y$ is not already in
			            $Q'$, and neither is there a similar one. Thus, $Y$ is
			            added to $Q'$ and $\delta'=(X,b,Y)=(S,b,Y)$ is added to
			            $\Delta'$. Consequently, $(S_0,w'b)\move_{\aut
					            A}^\ast(S,b)\move_{\delta'}(Y,\emptyseq)$ and
			            $\idblank{m}wa \sim \idblank{m}w'\beta a \sim
				            \idblank{m}w'b$ by the induction hypothesis and the
			            construction of $b$. This proves the proposition with
			            $a'=b$, $\gamma =\emptyseq$, and
			            $S'=Y$.\qedhere
		      \end{itemize}
	\end{itemize}
\end{proof}

\begin{lemma}\label{lemma:Gamma'-to-Gamma}
	Let $S\in Q'$, $\beta \in\Blanks$, $q\in Q$, and $w'\in\Terms^\ast$ be such
	that $(\beta,q)\in S$ and $(S_0,w')\move_{\aut A'}^\ast(S,\emptyseq)$. Then
	there exists $w\in\Terms^\ast$ such that $(q_0,w)\move_{\aut
			A}^\ast(q,\emptyseq)$ and $\idblank{m}w \sim \idblank{m}w'\beta$.
\end{lemma}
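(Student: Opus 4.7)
The plan is to mirror the proof of \lemmaref{lemma:Gamma-to-Gamma'} by induction on the length $\ell$ of the move sequence $(S_0,w')\move_{\aut A'}^\ast(S,\emptyseq)$ in the powerset automaton. For the base case $\ell=0$, we have $S=S_0=\closure{q_0}$ and $w'=\emptyseq$. Unfolding the definition of $\closure{q_0}$, the fact $(\beta,q)\in\closure{q_0}$ yields some $\sigma\in\Blanks^\ast$ with $(q_0,\sigma)\move_{\aut A}^\ast(q,\emptyseq)$ and $\idblank{m}\beta\sim\idblank{m}\sigma$, so $w=\sigma$ does the job.

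For the inductive step, decompose $w'=v'a'$ with $(S_0,v')\move_{\aut A'}^{\ell-1}(S',\emptyseq)\move_{\delta'}(S,\emptyseq)$, and trace $\delta'=(S',a',S)$ back to the line of \algref{alg:powerset} that created it. Transitions added at \alineref{a:add-final-prod} have target $S_f=\emptyset$, which is incompatible with $(\beta,q)\in S$, so only \alinesref{a:add-transition-1} and \ref{a:add-transition-2} need to be considered. In both cases the algorithm selected some $X=S'$, some $a_0\in\Terms\setminus\Blanks$, and some $Y=\closure{\psi(a_0)}$; by construction of $\psi$ there are then $(\gamma,p)\in S'$ and a transition $(p,b,q')\in\Delta$ with $a_0\sim\gamma b$. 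Following the closure structure, $(\beta,q)\in S$ extracts a pair $(\beta'',q')$ leading to $q$ via a blank sequence $\tau\in\Blanks^\ast$ with $(q',\tau)\move_{\aut A}^\ast(q,\emptyseq)$. Applying the induction hypothesis to $(\gamma,p)\in S'$ gives $v\in\Terms^\ast$ with $(q_0,v)\move_{\aut A}^\ast(p,\emptyseq)$ and $\idblank{m}v\sim\idblank{m}v'\gamma$. Set $w=vb\tau$ (or $w=vb$ when $\tau=\emptyseq$); the required sequence $(q_0,w)\move_{\aut A}^\ast(q,\emptyseq)$ then arises by chaining these three moves.

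Verifying $\idblank{m}w\sim\idblank{m}w'\beta$ is where the two sub-cases actually differ. In the \alineref{a:add-transition-2} case, $a'=a_0$ and the extracted $\beta''$ is simply $\beta$, so the identity follows directly from $a_0\sim\gamma b$ and $\idblank{j}\beta\sim\idblank{j}\tau$, together with \factsubref{a:powerset-assumptions}{as:identity}. In the \alineref{a:add-transition-1} case, the algorithm has compressed a tail-blank $\beta'$ into the label by replacing $a_0$ with $a_1\sim a_0\beta'$ and $Y$ with $Y'$ where $Y=\beta'Y'$; here the extracted $\beta''$ satisfies $\beta''\sim\beta'\beta$, and one must compute $\idblank{m}w\sim\idblank{m}v'a_0\tau\sim\idblank{m}v'a_0\beta''\sim\idblank{m}v'a_0\beta'\beta\sim\idblank{m}v'a_1\beta$.

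The main obstacle is precisely this bookkeeping of blanks in the \alineref{a:add-transition-1} case: one must carefully distinguish between the blank $\beta$ stored in the target state $S$, the blank $\beta''$ appearing in the pre-image $Y$, the tail-blank $\beta'$ absorbed into the transition label, and the algorithmically reconstructed trail $\tau$ inside $\aut A$. All three closure-style compositions must align up to $\sim$, and this alignment rests on repeated use of \factref{a:powerset-assumptions} together with the defining property $Y=\beta'Y'$, and on the separate treatment of the degenerate case $\tau=\emptyseq$, where $\sem\tau$ is undefined but the leading $\idblank{j}$ makes the equivalences go through.
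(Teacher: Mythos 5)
Your proof is correct and follows essentially the same route as the paper's: induction on the length of the move sequence in $\aut A'$, a case split on whether the last transition of $\aut A'$ was added in \alineref{a:add-transition-1} or \alineref{a:add-transition-2}, reconstruction of $w=vb\tau$ from the $\psi$-witness and the closure trail, and the same chain of $\sim$-equivalences handling the absorbed tail-blank $\beta'$. The only blemish is notational: the intermediate configuration in your decomposition should read $(S_0,v'a')\move_{\aut A'}^{\ell-1}(S',a')\move_{\delta'}(S,\emptyseq)$ rather than showing $\emptyseq$ as the remaining input before $\delta'$ fires.
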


\begin{proof}
	We prove the proposition by induction over the length $\ell$ of the
	move sequence $(S_0,w')\move_{\aut A'}^\ast(S,\emptyseq)$.
	For $\ell=0$, we have $(S_0,\emptyseq)\move_{\aut A'}^0(S_0,\emptyseq)$.
	Consider any $(\beta,q)\in S_0$. $S_0=\closure{q_0}$ implies
	$(q_0,\sigma)\move_{\aut A}^\ast(q,\emptyseq)$ for some
	$\sigma\in\Blanks^\ast$ such that $\idblank{m}\sigma \sim \idblank{m}\beta$.
	This proves the proposition by choosing $w'=\emptyseq$ and $w=\sigma$.

	For $\ell>0$, the sequence $(S_0,w')\move_{\aut A'}^\ast(S,\emptyseq)$ has the
	form $(S_0,w_0'a)\move_{\aut A'}^{\ell-1}(S',a)\move_\delta(S,\emptyseq)$,
	where $w'=w_0'a$ and $\delta=(S',a,S)\in\Delta'$. By the induction
	hypothesis, for every $(\gamma,q')\in S'$, there is $w_0\in\Terms^\ast$ such
	that $(q_0,w_0)\move_{\aut A}^\ast(q',\emptyseq)$ and $\idblank{m}w_0 \sim
		\idblank{m}w_0'\gamma$. We show that $(q_0,w)\move_{\aut
			A}^\ast(q,\emptyseq)$ for some $w\in\Terms^\ast$ such that $\idblank{m}w
		\sim \idblank{m}w_0'a\beta$. This will complete the proof because
	$w'=w_0'a$.

	Transition $\delta=(S',a,S)$ must have been added to $\Delta'$ in one of the
	two cases selected in \alinesref{a:Q'-case-2} or \ref{a:Q'-case-3}.
	\begin{itemize}
		\item \alineref{a:Q'-case-2}: $\delta=(S',a,S)$ has been added to
		      $\Delta'$ after $S'$ had been selected as $X$ in
		      \alineref{a:select-Q}, a terminal $a_0\in\Terms\setminus\Blanks$
		      had been selected in \alineref{a:select-a}, $Y$ had been computed
		      in \alineref{a:compute-Q'}, and $S=Y'$ had been identified by
		      $Y=\beta'Y'$ for some blank $\beta'\in\Blanks$ with
		      $\type(\beta')=(j,j)$ and $a \sim a_0\beta'$. Since $(\beta,q)\in
			      S=Y'$, $Y=\beta'Y'$ must contain some $(\gamma_0,q)$ with
		      $\gamma_0 \sim \beta'\beta$. There must be a state $q''\in Q$ such
		      that $(\gamma_0,q)\in\closure{q''}$ and $q''\in\psi(a_0)$, where
		      $\psi$ is computed in \alineref{a:compute-M}, because
		      $(\gamma_0,q)\in Y$. By the definition of $\closure X$,
		      $(\gamma_0,q)\in\closure{q''}$ implies $(q'',\sigma)\move_{\aut
				      A}^\ast(q,\emptyseq)$ for some $\sigma\in\Blanks^\ast$ such that
		      $\idblank{j}\sigma \sim \idblank{j}\gamma_0$. By the definition of
		      $\psi$, there must be a transition $(q',b,q'')\in\Delta$ and
		      $(\gamma,q')\in S'=X$ such that $a_0 \sim \gamma b$ and thus
		      $(q_0,w_0b\sigma)\move_{\aut A'}^\ast(q',b\sigma)\move_{\aut
				      A}(q'',\sigma)\move_{\aut A}^\ast(q,\emptyseq)$ by the induction
		      hypothesis for some $w_0\in\Terms^\ast$ such that
		      $\idblank{m}w_0'\gamma \sim \idblank{m}w_0$. Therefore,
		      \[
			      \idblank{m}w_0b\sigma \sim
			      \idblank{m}w_0'\gamma b\sigma \sim
			      \idblank{m}w_0'a_0\sigma \sim
			      \idblank{m}w_0'a_0\gamma_0 \sim
			      \idblank{m}w_0'a_0\beta' \beta \sim
			      \idblank{m}w_0'a\beta.
		      \]
		      This proves the statement of the lemma in this case by choosing
		      $w=w_0b\sigma$.
		\item \alineref{a:Q'-case-3}: $\delta=(S',a,S)$ has been added to
		      $\Delta'$ after $S'$ had been selected as $X$ in
		      \alineref{a:select-Q} and $S=Y$ had been computed in
		      \alineref{a:compute-Q'} after selecting $a$ as $a_0$ in
		      \alineref{a:select-a}. There must be a state $q''\in Q$ such that
		      $(\beta,q)\in\closure{q''}$ and $(a,q'')\in\psi$ computed in
		      \alineref{a:compute-M} because $(\beta,q)\in S=Y$. Similar to the
		      previous case, $(\beta,q)\in\closure{q''}$ implies
		      $(q'',\sigma)\move_{\aut A}^\ast(q,\emptyseq)$ for some
		      $\sigma\in\Blanks^\ast$ such that $\idblank{j}\sigma \sim
			      \idblank{j}\beta$. By the definition of $\psi$, there must be a
		      transition $(q',b,q'')\in\Delta$ and $(\gamma,q')\in S'=X$ with $a
			      \sim \gamma b$ and thus $(q_0,w_0b\sigma)\move_{\aut
				      A}^\ast(q',b\sigma)\move_{\aut A}(q'',\sigma)\move_{\aut
				      A}^\ast(q,\emptyseq)$ by the induction hypothesis for some
		      $w_0\in\Terms^\ast$ such that $\idblank{m}w_0'\gamma \sim
			      \idblank{m}w_0$. Therefore,
		      \[
			      \idblank{m}w_0b\sigma \sim
			      \idblank{m}w_0'\gamma b\sigma \sim
			      \idblank{m}w_0'a\sigma \sim
			      \idblank{m}w_0'a\beta.
		      \]
		      This proves the statement in this case, and thus completes the
		      proof of the lemma, by setting $w=w_0b\sigma$.\qedhere
	\end{itemize}
\end{proof}

\noindent We are now ready to prove the equivalence of a 
finite automaton and its powerset automaton using the following notion of equivalence:

\begin{defn}[Equivalent automata]\label{def:equivalent-automata}%
	Two finite automata of the same type $(m,n)$ are \emph{equivalent} if the following holds: for
	every $w\in\L(\aut A)$ there exists $w'\in\L(\aut A')$ such that
	$\idblank{m}w \sim \idblank{m}w'$, and for every $w'\in\L(\aut A')$ there
	exists $w\in\L(\aut A)$ such that $\idblank{m}w \sim \idblank{m}w'$.
\end{defn}

\begin{theorem}
	$\aut A$ and the powerset automaton $\aut A'$ computed from $\aut A$
	by \algref{alg:powerset} are equivalent.
\end{theorem}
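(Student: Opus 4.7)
The plan is to read the theorem off the two preceding lemmas once the acceptance conditions on both sides have been properly lined up. The enabling observation is that $F'=\{S_f\}$ and that every transition entering $S_f$ has been inserted in \alineref{a:add-final-prod}, so it has the form $(S,\beta,S_f)$ with $\beta\in\Blanks$ and $(\beta,q)\in S$ for some $q\in F$. Consequently, every accepting run of $\aut A'$ factors as $(S_0,w_0'\beta)\move_{\aut A'}^\ast(S,\beta)\move(S_f,\emptyseq)$, and conversely any such factorization extends to an accepting run.

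For the direction $\L(\aut A)\rightsquigarrow\L(\aut A')$, I would start from $w\in\L(\aut A)$, pick $q\in F$ with $(q_0,w)\move_{\aut A}^\ast(q,\emptyseq)$, and invoke \lemmaref{lemma:Gamma-to-Gamma'} to obtain $w'$, $S$, and $\beta$ such that $(S_0,w')\move_{\aut A'}^\ast(S,\emptyseq)$, $(\beta,q)\in S$, and $\idblank{m}w\sim\idblank{m}w'\beta$. Since $q\in F$, the transition $(S,\beta,S_f)$ sits in $\Delta'$, so appending one more move yields $(S_0,w'\beta)\move_{\aut A'}^\ast(S_f,\emptyseq)$; hence $w'\beta\in\L(\aut A')$ is blank-equivalent to $w$, which is exactly what \defref{def:equivalent-automata} demands.

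For the converse, I would take $w'\in\L(\aut A')$, use the factorization above to write $w'=w_0'\beta$ with $(S_0,w_0')\move_{\aut A'}^\ast(S,\emptyseq)$ and $(\beta,q)\in S$ for some $q\in F$, and then apply \lemmaref{lemma:Gamma'-to-Gamma} to produce $w\in\Terms^\ast$ with $(q_0,w)\move_{\aut A}^\ast(q,\emptyseq)$ and $\idblank{m}w\sim\idblank{m}w_0'\beta=\idblank{m}w'$. Since $q\in F$, this gives $w\in\L(\aut A)$.

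I do not anticipate any real obstacle: the substantive work is already carried by the two lemmas, and what remains is essentially plumbing around the acceptance condition. The one point meriting care is the bookkeeping of the ``trailing blank'' $\beta$ that the powerset construction has absorbed into the state label; restoring it as the last symbol consumed before entering $S_f$, and verifying the resulting chain of blank equivalences, uses nothing beyond \factref{a:powerset-assumptions}.
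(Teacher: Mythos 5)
Your proposal is correct and follows essentially the same route as the paper's own proof: both directions are obtained by invoking \lemmaref{lemma:Gamma-to-Gamma'} and \lemmaref{lemma:Gamma'-to-Gamma} and then handling the trailing blank via the transitions into $S_f$ added in \alineref{a:add-final-prod}. The only detail the paper makes explicit that you leave implicit is that $(S,\beta,S_f)\in\Delta'$ because $S$, having been added to $Q'$, must eventually be selected as $X$ in \alineref{a:select-Q}; this is routine and does not affect correctness.
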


\begin{proof}
	We have to prove both implications in \defref{def:equivalent-automata}. To
	prove the first implication, let $w\in\Terms^\ast$ and consider a sequence
	$(q_0,w)\move_{\aut A}^\ast(q,\emptyseq)$ with $q\in F$.
	By \lemmaref{lemma:Gamma-to-Gamma'}, there exist
	$w''\in\Terms^\ast$, $S\in Q'$, and $\beta \in\Blanks$ such that
	$(S_0,w'')\move_{\aut A'}^\ast(S,\emptyseq)$, $(\beta,q)\in S$, and
	$\idblank{m}w\sim \idblank{m}w''\beta$. Since $S\in Q'$, $S$ must have
	been selected as $X$ in \alineref{a:select-Q} at some point. Since $q\in
		F$ and $(\beta,q)\in S$, a transition $(S,\beta,S_f)$ has been added to
	$\Delta'$ in \alineref{a:add-final-prod}, and thus
	$(S_0,w''\beta)\move_{\aut A'}^\ast(S,\beta)\move_{\aut A'}(S_f,\emptyseq)$,
	i.e., $w''\beta\in\L(\aut A')$, which proves the first implication by
	choosing $w'=w''\beta$.

	Now consider any $w'\in\Terms^*$ and a sequence $(S_0,w')\move_{\aut
			A'}^\ast(S_f,\emptyseq)$. Since $S_f$ can only be reached by
	transitions added in \alineref{a:add-final-prod}, this sequence must
	have the form $(S_0,w''\beta)\move_{\aut
			A'}^\ast(S,\beta)\move_{\aut A'}(S_f,\emptyseq)$ for some $S\in q'$,
	$w''\in\Terms^\ast$, $q\in F$, and $\beta\in\Blanks$ such that
	$w'=w''\beta$ and $(\beta,q)\in S$. By
	\lemmaref{lemma:Gamma'-to-Gamma}, there is a $w\in\Terms^\ast$ such
	that $(q_0,w)\move_{\aut A}^\ast(q,\emptyseq)$ and $\idblank{m}w
		\sim \idblank{m}w''\beta=\idblank{m}w'$, which proves the other
	implication.
\end{proof}

\subsection{Making the Powerset Automaton Deterministic\label{sec:deterministic}}

Since \algref{alg:powerset} resembles the classical powerset construction, one might
assume that the resulting powerset automaton is
deterministic, but this is not always the case, demonstrated by the following example:

\begin{example}\label{ex:ambiguous-auto}%
  \begin{figure}[tb]
    \begin{minipage}{0.45\textwidth}
      \centering
      \begin{automaton}[node distance=12mm]
        \node[state,initial]       (q_0) {$q_0$};
        \node[state,accepting] (q_1) [right=of q_0] {$q_1$};
        \node[state]                 (q_2) [below=of q_1] {$q_2$}; 
        \node[state,accepting] (q_3) [below=of q_0] {$q_3$}; 
        \path[->]
          (q_0) edge node[above] {$a^{12}_{12}$} (q_1)
                  edge node[below,sloped] {$b^{12}_{21}$} (q_2)
                  edge node[left] {$b^{12}_{12}$} (q_3)
          (q_1) edge node[left] {$\emptyseq^{(2)}_{21}$} (q_2) 
          (q_2) edge[bend right] node[right] {$\emptyseq^{(2)}_{21}$} (q_1) 
        ;
      \end{automaton} 
      \caption{Automaton $\aut B$ of \exref{ex:ambiguous-auto}.}
      \label{f:ambiguous-auto}
    \end{minipage}
    \hfill
    \begin{minipage}{0.45\textwidth}
      \centering
      \begin{automaton}
        \node[state,initial]       (S_0) {$S_0$};
        \node[state] (S_1) [right=of S_0] {$S_1$};
        \node[state]                 (S_2) [below=of S_0] {$S_2$}; 
        \node[state,accepting] (S_f) [below=of S_1] {$S_f$}; 
        \path[->]
          (S_0) edge[bend left] node[above] {$a^{12}_{12}$} (S_1)
                  edge[bend right]                 node[below] {$b^{12}_{12}$} (S_1)
                  edge                 node[left]     {$b^{12}_{12}$} (S_2)
          (S_1) edge                node[right]   {$\emptyseq^{(2)}$} (S_f) 
          (S_2) edge node[below] {$\emptyseq^{(2)}$} (S_f) 
        ;
      \end{automaton} 
      \caption{Nondeterministic automaton $\aut B'$ produced from $\aut B$ by \algref{alg:powerset}.}
      \label{f:ambiguous-auto-det}
    \end{minipage}
  \end{figure}
	Let $a$ and $b$ be symbols of rank 2, and consider the finite automaton $\aut B$ shown in \figref{f:ambiguous-auto} as 
	input to \algref{alg:powerset}.
	Before the first iteration of the while loop beginning at
	\alineref{a:while-loop} starts, we have
	$S_0=\{(\sblank{2}{12},q_0)\}$, $Q'=\{S_0,S_f\}$,
	$\Delta'=\emptyset$, and $\VWork=\{S_0\}$. The first
	iteration of the while loop then selects $X=S_0$ and computes
	$\psi=\{(a^{12}_{12},q_1),(b^{12}_{21},q_2),(b^{12}_{12},q_3)\}$. The nested
	foreach loop has the following iterations:

	\begin{enumerate}
		\item $a_0=a^{12}_{12}$ and $Y=\{(\sblank{2}{12},q_1),
			      (\sblank{2}{21},q_2)\}$.
		      \alinesref{a:expand-nonterms} and~\ref{a:add-transition-2} add $Y$
		      as a new state $S_1$ to $Q'$ and $(S_0,a^{12}_{12}, S_1)$ to
		      $\Delta'$.

		\item  $a_0=b^{12}_{21}$ and $Y=\{(\sblank{2}{12},q_2),
			      (\sblank{2}{21},q_1)\}$.
		      Since $Y=\sblank{2}{21}S_1$, \alineref{a:add-transition-1}
		      computes $a_1=b^{12}_{12}$ because $b^{12}_{12} \sim
			      b^{12}_{21}\,\sblank{2}{21}$, and adds
		      $(S_0,b^{12}_{12},S_1)$ to $\Delta'$.

		\item  $a_0=b^{12}_{12}$ and $Y=\{(\sblank{2}{12},q_3)\}$. 
		      \alinesref{a:expand-nonterms}
		      and~\ref{a:add-transition-2} add $Y$ as a new state $S_2$ to $Q'$
		      and $(S_0,b^{12}_{12},S_2)$ to $\Delta'$.
	\end{enumerate}
	\algref{alg:powerset} terminates after adding
	$(S_1,\sblank{2}{12},S_f)$ and
	$(S_2,\sblank{2}{12},S_f)$ to $\Delta'$ with 
	$Q'=\{S_0,S_1,S_2,\allowbreak S_f\}$. 
	\figref{f:ambiguous-auto-det} shows the resulting automaton $\aut B'$.
	We have $(S_0,b^{12}_{12},S_1),
		(S_0,b^{12}_{12},S_2)\in\Delta'$, that is, $\aut B'$ is
	nondeterministic.\qed
\end{example}

The reason for the nondeterminism of the resulting automaton is apparently the
use of the atoms $b^{12}_{12}$ and $b^{12}_{21}$, which is problematic because $b^{12}_{12} \sim
	b^{12}_{21}\,\sblank{2}{21}$. Let us call automata that use such
symbols \emph{ambiguous}:

\begin{defn}[Ambiguous finite automata]\label{d:ambiguous}%
	A finite automaton
	$\aut A = (\GVoc, Q, \Delta, q_0, F)$ is
	\emph{ambiguous} if there are symbols $a,a'\in\GVoc\setminus\Blanks$ such
	that $a\neq a'$ and $a\beta \sim a'\beta'$ for some
	$\beta,\beta'\in\Blanks$
	(i.e., $\sem a$ and $\sem{a'}$ differ only in their rears).
    $\aut A$ is called \emph{unambiguous} if it is not ambiguous.
\end{defn}

Note that the automaton defined in \exref{ex:ambiguous-auto} is ambiguous by
this definition because $b^{12}_{21}\,\sblank{2}{21} \sim b^{12}_{12} \sim
	b^{12}_{12}\,\sblank{2}{12}$.

\begin{theorem}\label{thm:deterministic}
	The powerset automaton $\aut{A'}$ computed by \algref{alg:powerset} is deterministic if $\aut A$ is unambiguous.
\end{theorem}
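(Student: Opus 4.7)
The plan is to prove the contrapositive: if $\aut{A'}$ is nondeterministic, then $\aut A$ is ambiguous. Suppose $(X,a,Y_1),(X,a,Y_2)\in\Delta'$ with $Y_1\neq Y_2$. Transitions targeting the sink $S_f$ carry blank labels and all lead to $S_f$, so they alone cannot produce such a conflict. Both offending transitions must therefore have a non-blank label and have been added while processing a single state $X$ (each element of $Q'$ enters $\VWork$ at most once). Consequently they originate from two different elements $a_0,a_0'\in\domain\psi$, and their shared label satisfies $a\sim a_0\beta_1\sim a_0'\beta_2$ for some blanks $\beta_1,\beta_2\in\Blanks$ (identity blanks when the direct branch adding $(X,a_0,Y)$ was taken).

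Because $\AllTerms_\Voc$ is a canonical alphabet, the relation $\sim$ coincides with identity on single canonical symbols. Pre- or postcomposition with a blank preserves the edge label and, after canonicalization, the front interface of an atom. Hence $a_0$ and $a_0'$ have the same edge label and the same canonical front; being distinct canonical atoms, they must differ precisely in their rear interfaces, so $\sem{a_0}$ and $\sem{a_0'}$ differ only in their rears.

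By the construction of $\psi$ there exist pairs $(\gamma_1,q_1),(\gamma_2,q_2)\in X$ and transitions $(q_1,b_1,q_1'),(q_2,b_2,q_2')\in\Delta$ with $a_0\sim\gamma_1 b_1$, $a_0'\sim\gamma_2 b_2$, and $b_1,b_2\in\GVoc$. Since left-composition with a blank leaves the rear of an atom untouched, the rear of $b_i$ equals that of $a_0$ respectively $a_0'$; thus $b_1$ and $b_2$ have different rears, and in particular $b_1\neq b_2$. The remaining goal is to exhibit blanks $\beta,\beta'\in\Blanks$ with $b_1\beta\sim b_2\beta'$, which by \defref{d:ambiguous} makes $\aut A$ ambiguous and completes the contrapositive.

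The main obstacle is this last step: from $\gamma_1 b_1\beta_1\sim\gamma_2 b_2\beta_2$ one must in effect strip the prefixes $\gamma_1,\gamma_2$. I would do this by careful bookkeeping under the canonical renumbering of $\gamma_i b_i$: entries of the canonical front that fall within the edge-attachment range record exactly which of $b_i$'s edge-attachment nodes are pointed to by $\gamma_i$'s rear, and their coincidence across $i=1,2$ (inherited from the shared canonical front of $a_0$ and $a_0'$) pins down a correspondence between $b_1$'s and $b_2$'s fronts up to a relabelling that is itself realizable by blank composition. Absorbing this relabelling, together with the established difference in rears, into new blanks $\beta,\beta'$ yields $b_1\beta\sim b_2\beta'$ and completes the argument.
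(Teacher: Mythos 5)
Your first half coincides with the paper's proof: you rule out the transitions added in \alineref{a:add-final-prod} (blank labels, common target $S_f$), observe that the two conflicting transitions must both have been created while the common source $X$ was being processed, hence in two different iterations of the foreach loop selecting distinct $a_0,a_0'\in\domain\psi$, and derive $a\sim a_0\beta_1\sim a_0'\beta_2$ with $a_0\neq a_0'$. The paper stops exactly here and reads this off as a violation of unambiguity. You instead try to push the ambiguity back to symbols $b_1,b_2$ of $\GVoc$ itself, via $a_0\sim\gamma_1 b_1$ and $a_0'\sim\gamma_2 b_2$ --- a more literal reading of \defref{d:ambiguous}, which quantifies over $\GVoc\setminus\Blanks$ while $a_0,a_0'$ need not lie in $\GVoc$.

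The gap is that this additional reduction is not actually carried out, and the claims made in its support do not hold. First, $\sim$ does not coincide with identity on canonical symbols: for $\rank(a)=1$ the distinct symbols $a^{123}_{23}$ and $a^{132}_{32}$ satisfy $a^{123}_{23}\sim a^{132}_{32}$ via the isomorphism exchanging the isolated nodes $2$ and $3$; consequently $a_0$ and $a_0'$ need not ``differ precisely in their rear interfaces'' --- they may also differ by a renumbering of isolated front nodes. Second, the rear of the canonical symbol representing $\gamma_i b_i$ agrees with the rear of $b_i$ only up to the node renumbering induced by gluing $\gamma_i$ in front, and since $\gamma_1$ and $\gamma_2$ may be different blanks you cannot conclude $b_1\neq b_2$ from $a_0\neq a_0'$ in the way you do; yet $b_1\neq b_2$ is precisely what \defref{d:ambiguous} requires you to exhibit. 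Third, the decisive step of stripping the prefixes from $\gamma_1 b_1\beta_1\sim\gamma_2 b_2\beta_2$ is announced as a plan (``careful bookkeeping \dots\ realizable by blank composition'') but never executed, and it is exactly the nontrivial content your route adds over the paper's. To complete your version you would need to prove a lemma of the form: if $\gamma_1 b_1\beta_1\sim\gamma_2 b_2\beta_2$ with $\gamma_i,\beta_i\in\Blanks$, $b_1,b_2\in\GVoc\setminus\Blanks$, and the two sides determine distinct canonical symbols, then $b_1\neq b_2$ and $b_1\beta\sim b_2\beta'$ for suitable blanks $\beta,\beta'$. As it stands, the proposal establishes only what the paper's own argument already establishes, plus an unproven further claim.
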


\begin{proof}
	Consider any unambiguous automaton $\aut A$, and assume that
	\algref{alg:powerset} produces a nondeterministic automaton $\aut A'$, i.e.,
	there are two transitions $\delta=(S,a,S')\in\Delta'$ and
	$\delta'=(S,a,S'')\in\Delta'$ such that $S' \neq S''$. Note
	that we have target state $S_f$ and $\beta\in\Blanks$ for all transitions
	added to $\Delta'$ in \alineref{a:add-final-prod}, and $Y\neq S_f$, $Y'\neq
		S_f$, $a_0\notin\Blanks$, as well as $a_1\notin\Blanks$ for all transitions
	added in \alinesref{a:add-transition-1} and~\ref{a:add-transition-2},
	respectively. Consequently, neither $\delta$ nor $\delta'$ can have been
	added by \alineref{a:add-final-prod}, and both of them must have been added
	within the same iteration of the while loop using the same set $\psi$
	computed in \alineref{a:compute-M}, but in different iterations of the
	foreach loop that have selected, say, $c_1$ and $c_2$ as $a_0$,
	respectively. If $\delta$ has been added to $\Delta'$ in
	\alineref{a:add-transition-1}, we have $a \sim c_1\beta_1$ for some
	$\beta_1\in\Blanks$, and $a=c_1$ if $\delta$ has been added in
	\alineref{a:add-transition-2}. Similarly, $a\sim c_2\beta_2$ for some
	$\beta_1\in\Blanks$, or $a=c_2$. We have $c_1 \neq c_2$ because $c_1$ and
	$c_2$ have been selected in different foreach loop iterations.
	Hence, either $c_1 \sim c_2\beta_2$, $c_1\beta_1\sim c_2$, or
	$c_1\beta_1\sim c_2\beta_2$, in contradiction to $\aut A$ being unambiguous.
\end{proof}

\begin{example}\label{ex:deterministic-auto}%
  \begin{figure}[tb]
  \centering
  \begin{automaton}
    \node[state,initial] (S_0) {$S_0$};
    \node[state] (S_1) [right=of S_0] {$S_1$};
    \node[state](S_2) [right=of S_1] {$S_2$};
    \node[state,accepting](S_f) [right=of S_2] {$S_f$};
    \path[->]
        (S_0) edge node[above] {$a^1_1$}
                        node[below,text=\Red] {$\delta_1$} (S_1)
        (S_1) edge node[above] {$b^1_2$}
                        node[below,text=\Red] {$\delta_3$} (S_2)
                edge[in=60,out=120,looseness=8,loop]
                        node[above] {$a^1_1$}
                        node[below,text=\Red] {$\delta_2$} ()
        (S_2) edge node[above] {$\emptyseq^{(1)}$}
                        node[below,text=\Red] {$\delta_4$} (S_f)
    ;
  \end{automaton}
  \caption{The deterministic automaton $\aut S'$ obtained from $\aut S$ (\figref{f:star-auto}) by \algref{alg:powerset}.}
  \label{f:det-star-auto}
  \end{figure}
	The automaton $\aut S$ shown in \figref{f:star-auto} is unambiguous, and 
	\algref{alg:powerset} thus computes a deterministic finite automaton (DFA) $\aut S'$ shown in \figref{f:det-star-auto}.
	The states of $\aut S'$ are $S_0=\{(\idblank{1},q_0)\}$, 
	$S_1=\{(\idblank{1},q_0),(\idblank{1},q_1)\}$, $S_2=\{\idblank{1},q_2)\}$, 
	and $S_f=\emptyset$.
\end{example}

The restriction to unambiguous automata is in fact insignificant for automata over
subsets of the canonical alphabet $\AllTerms_\Voc$ (and its canonical interpretation).
One can easily transform an automaton $\aut A$ into an equivalent unambiguous one using the following
iterative process:

Consider an automaton $\aut A$ that has $a,a'\in\GVoc\setminus\Blanks$ such that
$a \ne a'$ and $a\beta \sim a'\beta'$ for some $\beta,\beta'\in\Blanks$. Since
we use the canonical interpretation of symbols in $\AllTerms_\Voc$, $a$ and $a'$
must have the form $a=b^\phi_\rho$ and $a'=b^\phi_{\rho'}$ using the same label
$b\in\ed\Voc$ and front interface $\phi$, that is, $a$ and $a'$ differ only in
their rear interfaces. Hence, one can identify a more general rear interface
$\hat\rho$ such that $a \sim b^\phi_{\hat\rho}\,\hat\beta$ and $a' \sim
	b^\phi_{\hat\rho}\,\hat\beta'$ for some blanks $\hat\beta$ and $\hat\beta'$.
$\aut A'$ is then obtained from $\aut A$ by removing $a$ and $a'$ from $\GVoc$
and adding $b^\phi_{\hat\rho}, \hat\beta$, and $\hat\beta'$. Furthermore,
every transition $(q,a,q')$ in $\Delta$ is replaced by two transitions
$(q,b^\phi_{\hat\rho},q'')$ and $(q'',\hat\beta,q')$ where $q''$ is a new state
with $\rank(q'')=|\hat\rho|$ that is also added to $Q$, and proceeds similarly
for each transition referring to $a'$. This process is continued until $\aut A'$
is finally unambiguous.

\begin{example}
  \begin{figure}[tb]
    \begin{minipage}{0.5\textwidth}
     \centering
      \begin{automaton}
        \node[state,initial]       (q_0) {$q_0$};
        \node[state]                 (q_3p) [below=of q_0] {$q_3'$}; 
        \node[state,accepting] (q_3) [left=of q_3p] {$q_3$}; 
        \node[state]                 (q_2p) [right=of q_3p] {$q_2'$}; 
        \node[state]                 (q_2) [right=of q_2p] {$q_2$}; 
        \node[state,accepting] (q_1) [above=of q_2] {$q_1$};
        \path[->]
          (q_0) edge node[above] {$a^{12}_{12}$} (q_1)
                  edge node[right] {$b^{12}_{12}$} (q_2p)
                   edge node[left] {$b^{12}_{12}$} (q_3p)
           (q_1) edge[bend right] node[left] {$\emptyseq^{(2)}_{21}$} (q_2) 
          (q_2p) edge node[below] {$\emptyseq^{(2)}_{21}$} (q_2) 
          (q_3p) edge node[below] {$\emptyseq^{(2)}$} (q_3) 
          (q_2) edge[bend right] node[right] {$\emptyseq^{(2)}_{21}$} (q_1) 
        ;
      \end{automaton} 
      \caption{Unambiguous automaton $\aut B''$ obtained from $\aut B$ (\exref{ex:ambiguous-auto}).}
      \label{f:refactored-auto}
    \end{minipage}
    \hfill
    \begin{minipage}{0.45\textwidth}
      \centering
      \begin{automaton}
        \node[state,initial]       (S_0) {$S_0$};
        \node[state]                 (S_1) [right=of S_0] {$S_1$};
        \node[state,accepting] (S_f) [below=of S_1] {$S_f$}; 
        \node[state]                 (S_2) [below=of S_0] {$S_2$}; 
        \path[->]
          (S_0) edge node[above] {$a^{12}_{12}$} (S_1)
                  edge node[left]     {$b^{12}_{12}$} (S_2)
          (S_1) edge node[right]   {$\emptyseq^{(2)}$} (S_f) 
          (S_2) edge node[below] {$\emptyseq^{(2)}$} (S_f) 
        ;
      \end{automaton} 
      \caption{Automaton $\aut B_\mathrm{d}$ obtained from $\aut B''$ by \algref{alg:powerset}.}
      \label{f:refactored-auto-det}
    \end{minipage}
  \end{figure}
	\figref{f:refactored-auto} shows the unambiguous automaton $\aut B''$ obtained from 
	$\aut B$ in \exref{ex:ambiguous-auto} using the procedure sketched above. The ambiguous
	transitions $(q_0,b^{12}_{21},q_2)$ and $(q_0,b^{12}_{12},q_3)$ of $\aut B$ have 
	been split into four 
	transitions with new intermediate states $q_2'$ and $q_3'$. \algref{alg:powerset} now
	produces the DFA $\aut B_\mathrm{d}$ shown in 
	\figref{f:refactored-auto-det} when applied to $\aut B''$.
\end{example}

\subsection{Selecting Promising Transitions\label{sec:select}}
Let us now consider the DFA $\detaut A$ that has 
been produced by \algref{alg:powerset}, and use it for recognizing a valid graph
$G\in\L_G(\detaut A)$ using graph moves. First note that $\detaut A$ does not have any 
blank transitions except those that go into the final state, which does
not have any outgoing transitions at all. Consequently, blank transitions are 
only applied when the remaining input is isomorphic to the corresponding blank 
graph, that is, there is no choice when such transitions have to be applied. 
Despite determinism, this is not necessarily the case for the other transitions.
Consider the situation where we have reached a
configuration $(s,G)$ and $G$ contains some edges. 
Although $\detaut A$ is deterministic, several transitions may be applicable because
a graph can be composed from basic graphs in different ways (see \exsref{ex:composition} 
and~\ref{ex:star-auto}). So we are again forced to choose and possibly to backtrack.
To avoid this inefficient backtracking procedure, 
we would like to identify a unique transition, if it exists, that does not
lead into a dead end, by inspecting only local information. To motivate such a
procedure, let us first consider the simplest non-trivial case where $s$ has two
outgoing transitions, say $\delta,\delta'\in\Delta$ with $\delta\neq\delta'$.
Now assume that we can prove that every edge read by a move using
$\delta$ can never be read by any sequence of moves starting with $\delta'$. It
is then clear that any move using $\delta'$ instead of $\delta$ must inevitably
lead into a dead end, and we know for sure that $\delta$ must be picked for
continuing the recognition process. Moreover, backtracking later and trying
$\delta'$ is meaningless because we already know that it will fail. It is thus crucial
to find out whether an edge read by a move using $\delta$ can also be read by a
sequence of moves starting with $\delta'$. This is discussed next.

Again consider the situation where recognition has reached configuration
$(s,G)$, and two transitions $\delta,\delta'\in\Delta$, $\delta\neq\delta'$,
can be used for two competing moves
\begin{align}
	(s,G) & \gmove_\delta(s', G')\label{e:move-delta}      \\
	(s,G) & \gmove_{\delta'}(s'',G'')\label{e:move-delta'}
\end{align}
with $\delta=(s,a^\phi_\rho,s')$, $\delta'=(s,b^{\phi'}_{\rho'},s'')$,
$G=G_\delta\gcomp G'=G_{\delta'}\gcomp G''$ for two graphs $G_\delta$ and
$G_{\delta'}$ with $\ed G_\delta=\{e\}$, $\ed G_{\delta'}=\{e'\}$,
$G_\delta\iso\sem{a^\phi_\rho}$, and $G_{\delta'}\iso\sem{b^{\phi'}_{\rho'}}$.
The sequences $\phi$ and $\phi'$ 
specify how $e$ and $e'$ must be connected to front nodes of $G$ such that
\eqref{e:move-delta} and \eqref{e:move-delta'} are a valid moves, respectively.
To be more precise, let
$\xi,\xi'\colon[\rank(a)]\parto[\rank(s)]$ be partial functions defined by
\begin{align}
	\xi  & = \{(k,i)\in[\rank(a)]\times[\rank(s)]\mid k=\phi(i)\}.\label{e:def-xi}   \\
	\xi' & = \{(k,i)\in[\rank(b)]\times[\rank(s)]\mid k=\phi'(i)\}.\label{e:def-xi'}
\end{align}
Once more, we use the notation $\phi(i)$ to access the $i$-th number within the sequence~$\phi$.
Thus, $\xi$ and $\xi'$ assign the index of a front node (in the sequence of all front nodes of
$G$) to its index in the sequence of all nodes attached to $e$ or $e'$, respectively. If a node is
attached to $e$ or $e'$ as its $k$-th node, but is not a front node of $G$,
$\xi(k)$ or $\xi'(k)$ are undefined, respectively.

We now assume that $e$ is also read in the competing move \eqref{e:move-delta'}
or a subsequent move. If $e$ is read in \eqref{e:move-delta'}, we have $e=e'$ as
well as $a=b$, and $\xi=\xi'$ follows.

If $e\neq e'$, $e$ must be read later, that is, there must be a sequence of
moves
\begin{align}
	(s'',G'')\gmove_{\detaut A}^\ast(q,H)\gmove_{\delta''}(q',H')\label{e:consume-later}
\end{align}
with $\delta''=(q,a^{\phi''}_{\rho''},q')$, $H=H_{\delta''}\gcomp H'$, and $\ed
	H_{\delta''}=\{e\}$, and $H_{\delta''}\iso\sem{a^{\phi''}_{\rho''}}$.

Some front nodes of $G$ may also be front nodes of $H$ in
\eqref{e:consume-later}. Let us indicate this situation by a partial function
$\mu\colon[\rank(s)]\parto[\rank(q)]$ defined by
\begin{equation}
	\mu = \{(i,j)\in[\rank(s)]\times[\rank(q)]\mid\front_G(i)=\front_H(j) \}\label{e:def-mu}
\end{equation}
Consequently, $\{\front_G(i)\mid i\in\domain\mu\}$ is the subset of those front
nodes of $G$ that are also front nodes of $H$. Similar to $\xi$ and $\xi'$, let
us  define the partial function $\xi''\colon[\rank(a)]\parto[\rank(s)]$ by
\begin{align}
	\xi'' & = \{(k,i)\in[\rank(a)]\times\domain\mu\mid k=\phi''(\mu(i))\}.\label{e:def-xi''}
\end{align}
Function $\mu$ is used to refer to front nodes of $G$ instead of front nodes of $H$.

We now show that we have $\xi=\xi''$, similar to the case $e=e'$. To see this,
consider any node $v$ attached to $e$, that is, there is $k\in[\rank(a)]$
such that $v=\att_G(e,k)$.
\begin{itemize}
	\item If $v$ is not a front node of $G$, $k\notin[\phi]$ follows from
	      $G_\delta\iso\sem{a^\phi_\rho}$, and thus $k\notin\domain\xi$. Now
	      assume that $k\in\domain{\xi''}$ and, hence, $k=\phi''(j)$ as well as
	      $j=\mu(i)$ for some $i\in[\rank(s)]$ and $j\in[\rank(q)]$. This
	      implies $\front_G(i)=\front_H(j)=\att_G(e,k)=v$ in contradiction to
	      $v\notin[\front_G]$. Hence, $k\notin\domain{\xi''}$ as well.
	\item If $v$ is a front node of $G$, there is an index $i\in[\rank(s)]$ such
	      that $v=\front_G(i)$ and $k=\phi(i)$, because of
	      $G_\delta\iso\sem{a^\phi_\rho}$. Consequently, $k\in\domain\xi$ and
	      $i=\xi(k)$. When $e$ is read in the last move (using $\delta''$) in
	      \eqref{e:consume-later}, it must also be a front node of $H$ (otherwise,
	      $H$ could not contain~$v$ as a node because moves cannot turn front 
	      nodes into non-front nodes). Hence,
	      $i\in\domain\mu$ and $v=\front_H(\mu(i))$.
	      $H_{\delta''}\iso\sem{a^{\phi''}_{\rho''}}$ also implies
	      $k=\phi''(\mu(i))$, and thus $k\in\domain{\xi''}$ as well as
	      $i=\xi''(k)$.
\end{itemize}
Note that the definition of $\xi''$ in fact subsumes the definition of
$\xi'$ when using the identity on $[\rank(s)]$ as $\mu$. Suppose we could
compute the set $\Xi(\delta',a)$ of all functions $\xi''$ defined as in
\eqref{e:def-xi''} for any graph $G$ where $\mu$ represents any move sequence
starting with $\delta'$ applied to $G$ and finally reading an $a$-labelled edge
(that is, exactly the situation as in \eqref{e:move-delta'} and
\eqref{e:consume-later}). Then we can summarize the discussion above as follows: If $e$ can
be read also in move \eqref{e:move-delta'} or later in \eqref{e:consume-later},
then $\xi\in\Xi(\delta',a)$. Since $\xi$ and $\Xi(a)$ are independent of the
specific choice of $G$ and $e$, we can state the following observation:

\begin{observation}\label{o:dead-end}%
	Let $\delta,\delta'\in\Delta$, $\xi$, and $\Xi(\delta',a)$ defined as
	described above, $\xi\notin\Xi(\delta',a)$, and $G\in\G_\Voc$ any graph such
	that $\delta$ can be applied to $G$. Then $\delta'$ must not be tried for a
	move because it either cannot be applied to $G$, or its application
	leads inevitably into a dead end eventually.
\end{observation}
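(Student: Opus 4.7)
My plan is to formalize the case analysis sketched in the paragraphs immediately preceding the observation, using only the basic fact that any accepting run must eventually consume every edge of the input graph: moves only delete edges, and the accepting configuration must carry a blank, which has none.

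I would argue by contradiction. Assume that $\delta'$ is applicable to $G$ and that applying it does not lead inevitably to a dead end, so some move sequence starting with $(s,G) \gmove_{\delta'} (s'',G'')$ eventually reaches an accepting configuration. The specific edge $e$ consumed by $\delta$ in \eqref{e:move-delta} must then be consumed somewhere along this continuation, either by the very first move itself (the case $e = e'$) or strictly later, in a move of the shape \eqref{e:consume-later}. In the first case, $e = e'$ forces $a = b$, and since $\front_{G_\delta} = \front_G = \front_{G_{\delta'}}$ with both $G_\delta \iso \sem{a^\phi_\rho}$ and $G_{\delta'} \iso \sem{b^{\phi'}_{\rho'}}$ realizing the same correspondence between positions of $\att_G(e,\cdot)$ and front nodes of $G$, one obtains $\phi = \phi'$ and hence $\xi = \xi'$; specialising \eqref{e:def-xi''} to an empty continuation with $\mu$ the identity on $[\rank(s)]$ then exhibits $\xi' \in \Xi(\delta',a)$. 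In the second case, the node-tracking calculation already carried out in the text applies verbatim and yields $\xi = \xi'' \in \Xi(\delta',a)$. Either way, $\xi \in \Xi(\delta',a)$, contradicting the hypothesis.

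The only nontrivial ingredient — and really the main obstacle — is the structural fact underlying the second case: a front node of $G$ attached to $e$ that is still present in the intermediate graph $H$ must also be a front node of $H$. This follows from the definition of typed composition, since surviving nodes of the left operand of a composition lie on the rear–front interface and are therefore carried forward as front nodes of the right operand; iterating this invariant across the prefix $(s'',G'') \gmove_{\detaut A}^\ast (q,H)$ justifies the definition of $\mu$ in \eqref{e:def-mu}, and then $H_{\delta''} \iso \sem{a^{\phi''}_{\rho''}}$ forces $k = \phi''(\mu(i))$. Once this invariant is secured, verifying $\xi = \xi''$ is just bookkeeping of the kind the discussion has already performed.
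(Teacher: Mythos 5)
Your proposal is correct and follows essentially the same route as the paper, whose ``proof'' of the observation is precisely the discussion preceding it: the same case split ($e=e'$ versus $e$ read later), the same bookkeeping with $\xi$, $\xi'$, $\xi''$ and $\mu$, and the same key invariant that a surviving front node remains a front node under composition (stated in the paper as ``moves cannot turn front nodes into non-front nodes''). Your explicit contradiction framing and the remark that every edge must eventually be consumed because accepting configurations are blanks are just the contrapositive packaging of what the paper summarizes informally.
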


In fact, $\Xi(\delta',a)$ can be computed in a rather straightforward way
using \algref{alg:follow}. To simplify things, we use the notation
$\Delta_{q}=\{\delta\in\Delta\mid\exists a\in\GVoc,q'\in Q: \delta=(q,a,q')\}$
for indicating the set of transitions leaving a state~$q$. Let us further define
\begin{align}
	\VFollow(\delta)=\{(a,\xi)\mid \xi\in\Xi(\delta,a)\}.\label{e:def-Follow}
\end{align}
The following lemma states that \algref{alg:follow} computes
$\VFollow(\delta)$.
\begin{algorithm}[tb]
	\caption{Determining $\VFollow(\delta_0)$.}
	\label{alg:follow}
	\Input{Finite automaton $\detaut A = (\GVoc, Q, \Delta, q_0, F)$ produced by \algref{alg:powerset} and a transition $\delta_0\in\Delta$.}
	\Output{Follow set $\VFollow$.}
	let $\delta_0\in\Delta_{s}$\;
	$\VWork\leftarrow\{(\delta_0,\nu)\}$ where $\nu=\{(i,i)\mid i\in[\rank(s)]\}$\;
	$\VDone\leftarrow\emptyset$, $\VFollow\leftarrow\emptyset$\;
	\While{$\VWork \neq \emptyset$\label{a2:while}}{
		select and remove any $(\delta,\mu)$ from $\VWork$\;\label{a2:select-Q}
		add $(\delta,\mu)$ to $\VDone$\label{a2:add-to-Done}\;
		let $\delta=(q,\alpha, q')$\;
		\If{$\alpha\notin\Blanks$}{
			let $\alpha=a^\phi_\rho$\;
			$\xi=\{(i,j)\in[\rank(a)]\times\domain\mu\mid i=\phi(\mu(j))\}$\;\label{a2:compute-xi}
			add $(a,\xi)$ to $\VFollow$\;\label{a2:add-pair}
			$\mu'\leftarrow\{(i,j)\in\domain\mu\times[\rank(q')]\mid \phi(\mu(i))=\rho(j)\}$\label{a2:def-f'}\;
			\ForEach{$\delta'\in\Delta_{q'}$ such that $(\delta',\mu')\notin\VDone$}{
				add $(\delta',\mu')$ to $\VWork$\label{a2:add-p'}
			}
		}
	}
\end{algorithm}

\begin{lemma}\label{l:follow}%
	Let $\Voc$ be a ranked alphabet, $\GVoc$ a finite set of canonical graph
	symbols for $\Voc$, $\detaut A = (\GVoc, Q, \Delta, q_0, F)$ a finite
	automaton produced by \algref{alg:powerset}, and $\delta_0\in\Delta$ a transition. 
	Called with $\detaut A$ and $\delta_0$, \algref{alg:follow} returns 
	$\VFollow(\delta_0)$ for $\detaut A$.
\end{lemma}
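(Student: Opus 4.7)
The plan is to prove the lemma by establishing a semantic invariant on the pairs $(\delta,\mu)$ maintained by \algref{alg:follow}, from which both inclusions $\VFollow\subseteq\VFollow(\delta_0)$ and $\VFollow(\delta_0)\subseteq\VFollow$ follow. Termination of the while loop is immediate: the number of candidate pairs $(\delta,\mu)$ is finite because $\Delta$ is finite and each $\mu$ is a partial function between index sets of bounded size, and the guard $(\delta',\mu')\notin\VDone$ together with adding every processed pair to $\VDone$ ensures that no pair is enqueued more than once.

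The central invariant to establish is the following: $(\delta,\mu)\in\VWork\cup\VDone$ if and only if there exist a graph $G\in\GG_\Voc$ to which $\delta_0$ is applicable, a state $q\in Q$ with $\delta\in\Delta_q$, and a move sequence $(s,G)\gmove_{\delta_0}(s_1,G_1)\gmove_{\detaut A}^\ast (q,H)$ such that $\mu=\{(i,j)\in[\rank(s)]\times[\rank(q)]\mid \front_G(i)=\front_H(j)\}$, with the special base case $(\delta,\mu)=(\delta_0,\nu)$ corresponding to the empty suffix ($q=s$, $H=G$). The ``$\Rightarrow$'' direction is proved by induction on the number of iterations of the while loop, starting from the trivial verification that $\delta_0$ is applicable to some $G$ (e.g.\ $\sem{\alpha_0}$ composed with a suitable blank). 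The ``$\Leftarrow$'' direction is proved by induction on the length of the move sequence, relying on the fact that every enqueued pair is eventually selected.

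Given this invariant, soundness ($\VFollow\subseteq\VFollow(\delta_0)$) is straightforward: whenever \alineref{a2:add-pair} adds a pair $(a,\xi)$ while processing $(\delta,\mu)$ with $\delta=(q,a^\phi_\rho,q')$, the invariant supplies a concrete witness, and the formula for $\xi$ in \alineref{a2:compute-xi} is literally an instance of~(\ref{e:def-xi''}) applied to this witness (with $\phi''=\phi$ and the cumulative $\mu$), so $\xi\in\Xi(\delta_0,a)$. Completeness ($\VFollow(\delta_0)\subseteq\VFollow$) is the converse: any $\xi\in\Xi(\delta_0,a)$ arises by definition from some concrete move sequence ending with a transition $\delta=(q,a^\phi_\rho,q')$ that reads the $a$-edge, so the corresponding pair $(\delta,\mu)$ satisfies the invariant, is therefore enqueued and processed, and contributes $(a,\xi)$ to $\VFollow$.

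The main obstacle will be the induction step of the invariant, namely verifying that the $\mu'$ computed in \alineref{a2:def-f'} correctly describes the front-node correspondence after applying the non-blank transition $\delta=(q,a^\phi_\rho,q')$. This requires tracing, through the composition $H=H_\delta\gcomp H'$ with $H_\delta\iso\sem{a^\phi_\rho}$, which front nodes of $G$ survive in $\front_{H'}$ and at which positions: a node $\front_G(i)$ with $i\in\domain\mu$ sits at position $\phi(\mu(i))$ among the $\rank(a)$ nodes of $H_\delta$, and it appears as $\front_{H'}(j)$ exactly when this position equals $\rho(j)$, which is precisely the condition $\phi(\mu(i))=\rho(j)$ tested by the algorithm. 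Once this correspondence is established, both inductions go through smoothly. Blank transitions require no special treatment because in $\detaut A$, as produced by \algref{alg:powerset}, they only lead to the dead-end final state and thus cannot be followed by further edge-reading moves, so the algorithm correctly contributes nothing for them.
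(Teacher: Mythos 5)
Your proposal is correct and follows essentially the same route as the paper's own (sketched) proof: termination via finiteness of the pairs $(\delta,\mu)$ and the no-re-enqueueing discipline, then an induction on move sequences for completeness and an induction on while-loop iterations for soundness, with the key verification being that the $\mu'$ of \alineref{a2:def-f'} matches the front-node correspondence of \eqref{e:def-mu} through the composition $H=H_\delta\gcomp H'$. Packaging the two inductions as a single biconditional invariant, and the explicit remark that blank transitions in $\detaut A$ only enter the dead-end final state, are minor presentational refinements rather than a different argument.
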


\begin{sketch}
	Consider any $\detaut A$ and $\delta_0$  as in the lemma. We first show that
	\algref{alg:follow} terminates. To see this, first note that
	$\domain{\mu'}\subseteq\domain \mu$ and $\mu'(x)\in[\rank(q')]$ for each
	$x\in\domain{\mu'}$ of function $\mu'$ defined in \alineref{a2:def-f'}, and
	thus $\domain{\mu'}\subseteq[\rank(s)]$  for every pair $(\delta',\mu')$
	ever added to $\VWork$ in \alineref{a2:add-p'}. Consequently, only finitely
	many pairs $(\delta,\mu)$ can be added to $\VWork$ and $\VDone$. And because
	no pair is added to $\VWork$ again after it has been selected in
	\alineref{a2:select-Q} (and adding it to $\VDone$ in
	\alineref{a2:add-to-Done}), \algref{alg:follow} terminates.

	Now consider any $a\in\Voc$, $\xi\in\Xi(\delta_0,a)$ as for
	\obsref{o:dead-end}, and any graph $G\in\GG_\Voc$. Consequently, there is a
	move sequence starting with $\delta_0$ and finally reading an
	$a$-labeled edge using some transition $\delta$ such that $\xi$ is defined like $\xi'$
	or $\xi''$ in \eqref{e:def-xi'} and \eqref{e:def-xi''}, respectively, and
	using an appropriate partial function $\mu$. By induction on the move
	sequence, one can show that $\VWork$ eventually will contain $(\delta,\mu)$.
	Consequently, $(a,\xi)$ is added to $\VFollow$ in \alineref{a2:add-pair},
	that is, $(a,\xi)$ is contained in the result of \algref{alg:follow} as
	required.

	For the other direction, consider any pair $(a,\xi)$ in the result of
	\algref{alg:follow}. By induction on the number of iterations of the
	while-loop starting at \alineref{a2:while}, one can show that the set
	$\VDone$ only contains pairs $(\delta,\mu)$ such that there exists a move
	sequence starting with $\delta_0$ and ending with $\delta$ so that $\mu$ is
	exactly as in \eqref{e:def-mu}. Since $(a,\xi)\in\VFollow$, it must have
	been added to $\VFollow$ in \alineref{a2:add-pair} at some point after
	selecting $(\delta,\mu)$ in \alineref{a2:select-Q}. Using \eqref{e:def-xi'}
	and \eqref{e:def-xi''}, one can show that $\xi\in\Xi(\delta_0,a)$ as
	required.\qed
\end{sketch}

We now use \obsref{o:dead-end} for identifying the unique transition (if it exists) that
does not lead into a dead end, provided that $\detaut A$ satisfies certain conditions. To this
end, let us define
\begin{align}
	\VNext(\delta)=(a,\xi)\label{e:def-next}
\end{align}
for each transition $\delta=(s,a^\phi_\rho,s')\in\Delta$ where $\xi$ is defined
as in \eqref{e:def-xi}. Moreover, let $\prec_q \;\subseteq
	\Delta_{q}\times\Delta_{q}$ be such that
\begin{align}
	\delta\prec_q\delta' \quad\text{if}\quad \VNext(\delta')\in\VFollow(\delta)\label{e:def-prec}
\end{align}
for each $q\in Q$. According to \obsref{o:dead-end}, $\delta'\not\prec_q\delta$
then indicates that $\delta'$ may only be tried if $\delta$ cannot be applied.

Consider any state $q\in Q$. If $\prec_q^+$ is irreflexive, one can extend
$\prec_q$, by topological sorting, to a strict total order
$\sqsubset_q\;\subseteq \Delta_{q}\times\Delta_{q}$ such that
$\prec_q\;\subseteq\;\sqsubset_q$. Such an order $\sqsubset_q$ has the following
nice property with respect to \obsref{o:dead-end}:

\begin{lemma}\label{l:order}%
	Let $q\in Q$,  $\sqsubset_q\;\subseteq \Delta_{q}\times\Delta_{q}$ a strict total order such that
	$\prec_q\;\subseteq\;\sqsubset_q$, and $\delta,\delta'\in\Delta_{q}$. Then
	$\delta\sqsubset_q\delta'$ implies $\VNext(\delta)\notin\VFollow(\delta')$.
\end{lemma}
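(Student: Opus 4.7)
The plan is to prove this by a direct contrapositive (or equivalently, contradiction) argument using only the definition of $\prec_q$ in \eqref{e:def-prec} and the fact that $\sqsubset_q$ is a strict total order extending $\prec_q$.

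First I would assume, for contradiction, that $\delta \sqsubset_q \delta'$ yet $\VNext(\delta) \in \VFollow(\delta')$. Applying the definition \eqref{e:def-prec} directly (with the roles of $\delta$ and $\delta'$ swapped: reading the definition with $\delta'$ in place of the ``first'' transition and $\delta$ in place of the ``second''), the condition $\VNext(\delta) \in \VFollow(\delta')$ is exactly what it means for $\delta' \prec_q \delta$ to hold.

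Next, since $\prec_q \,\subseteq\, \sqsubset_q$ by hypothesis, it follows that $\delta' \sqsubset_q \delta$. But we also have $\delta \sqsubset_q \delta'$, which together with the strictness (in particular, asymmetry) of the total order $\sqsubset_q$ yields a contradiction. Hence $\VNext(\delta) \notin \VFollow(\delta')$, as claimed.

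There is no significant obstacle here; the lemma is essentially a bookkeeping statement asserting that any linearization of the ``mandatory precedence'' relation $\prec_q$ preserves the forward-looking property formalized in \obsref{o:dead-end}. The only subtlety worth pointing out explicitly in the write-up is that the definition \eqref{e:def-prec} is being instantiated with the two transitions exchanged, so that $\VNext(\delta) \in \VFollow(\delta')$ witnesses $\delta' \prec_q \delta$ rather than $\delta \prec_q \delta'$; readers might otherwise confuse the direction of the relation.
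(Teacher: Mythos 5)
Your proof is correct and follows essentially the same route as the paper's: assume $\VNext(\delta)\in\VFollow(\delta')$, read off $\delta'\prec_q\delta$ from the definition in \eqref{e:def-prec}, lift it to $\delta'\sqsubset_q\delta$, and contradict strictness (the paper phrases the final step via irreflexivity plus transitivity rather than asymmetry, which is the same thing). Your explicit remark about the swapped roles of the two transitions in \eqref{e:def-prec} is a helpful clarification, not a deviation.
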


\begin{proof}
	Consider any $q\in Q$ and $\delta,\delta'\in\Delta_{q}$ with
	$\delta\sqsubset_q\delta'$. Then $\VNext(\delta)\in\VFollow(\delta')$
	would imply $\delta'\prec_q\delta$ and thus $\delta'\sqsubset_q\delta$,
	contradicting the irreflexivity of $\sqsubset_q$.
\end{proof}

Such a total strict order $\sqsubset_q$ only exists if $\prec_q^+$ is 
irreflexive, which motivates the following:

\begin{defn}[Transition Selection Property]\label{d:ts}%
  A DFA $\aut A = (\GVoc, Q, \Delta, q_0, F)$ has the \emph{transition 
  selection} (\emph{TS}) \emph{property} if $\prec_q^+$ as defined in 
  \eqref{e:def-prec} is irreflexive for each state $q\in Q$.
\end{defn}

An immediate consequence of \lemmaref{l:order} is the following:

\begin{lemma}\label{l:ts}%
   For every DFA $\aut A$ with the TS property, there is a constant time procedure which,
   given a configuration of $\aut A$, selects the unique transition
   that allows to continue the recognition process without running into a dead end,
   provided that such a transition exists.
\end{lemma}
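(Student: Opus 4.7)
The plan is to use the total strict order $\sqsubset_q$ provided by the TS property to enumerate the outgoing transitions at~$q$ in a fixed priority order, and simply return the first one that is applicable to the input graph.

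First, I would invoke the TS property: since $\prec_q^+$ is irreflexive, $\prec_q$ is acyclic on the finite set $\Delta_q$ and can be extended by topological sorting to a strict total order $\sqsubset_q$ on $\Delta_q$ with $\prec_q\;\subseteq\;\sqsubset_q$. Because the DFA is fixed, all these orderings are precomputed once and stored as a constant-size lookup indexed by~$q$. Given a configuration $(q,G)$, the procedure then iterates over $\Delta_q$ in the order $\sqsubset_q$ and returns the first transition $\delta=(q,a^\phi_\rho,q')$ that is applicable to~$G$, or reports failure if no such transition exists. This loop executes at most $|\Delta_q|\leq|\Delta|$ times, which is a constant since $\aut A$ is fixed. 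Each applicability test only checks whether some edge of $G$, connected to the indicated front nodes of~$G$ in the way prescribed by $a^\phi_\rho$, exists; since atoms have bounded size, maintaining an incidence index on~$G$ (for every node, the list of incident edges together with attachment positions) allows this check to run in constant time.

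For correctness I have to show that, whenever there exists some transition in $\Delta_q$ whose application does not inevitably lead into a dead end, the procedure picks precisely that transition. So let $\delta^\ast\in\Delta_q$ be such a (necessarily applicable) transition, and let $\delta$ be the $\sqsubset_q$-least applicable transition chosen by the procedure. If $\delta\neq\delta^\ast$, then $\delta\sqsubset_q\delta^\ast$, and because $\sqsubset_q$ is a strict total order extending $\prec_q$ we must have $\delta^\ast\not\prec_q\delta$, i.e., $\VNext(\delta)\notin\VFollow(\delta^\ast)$ by the definition in \eqref{e:def-prec}. Applying \obsref{o:dead-end} with $\delta^\ast$ in the role of the competing transition~$\delta'$, we conclude that $\delta^\ast$ either cannot be applied to~$G$ or its application inevitably leads into a dead end, contradicting the choice of $\delta^\ast$. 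Hence $\delta=\delta^\ast$, which establishes both uniqueness and that the selected transition is the correct one.

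The main obstacle is really just the constant-time applicability test; the essential combinatorial content already resides in \obsref{o:dead-end} and \lemmaref{l:order}, so the argument above is largely a packaging step. With a suitable incidence index on~$G$ (set up once in time linear in $G$ before recognition starts, and updated in constant time per move as edges are consumed), each test only inspects a bounded number of edges and nodes, and the topological sort yielding $\sqsubset_q$ is trivially feasible because $\prec_q$ is acyclic by the TS assumption.
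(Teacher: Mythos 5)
Your proposal is correct and follows essentially the same route as the paper: precompute a topological extension $\sqsubset_q$ of $\prec_q$ (which exists by the TS property), scan $\Delta_q$ in that order, return the first applicable transition, and justify correctness via \lemmaref{l:order} together with \obsref{o:dead-end}; the constant-time applicability test via local incidence information also matches the paper's argument. Your version merely phrases the correctness step as an explicit contradiction with a hypothetical good transition $\delta^\ast$, which is a slightly more detailed rendering of the same reasoning.
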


\begin{proof}
	Assume that a configuration with state $q$ has been reached during
	recognition. There exists an order $\sqsubset_q$ as in \lemmaref{l:order} 
	because of the TS property. Now try all outgoing transitions of $q$ in ascending
	order of $\sqsubset_q$ and pick the first one, say $\delta$, that can be
	applied. Then $\delta\sqsubset_q\delta'$ holds for every outgoing transition
	$\delta'$ that has not yet been tried, and thus
	$\VNext(\delta)\notin\VFollow(\delta')$ by \lemmaref{l:order}, that is,
	$\delta'$ must not be tried by \obsref{o:dead-end} because $\delta$ can be
	applied.
	
	The relation $\sqsubset_q$ depends only on $\aut A$ and $q$, and can thus
	be precomputed for every state $q$.
	Checking whether a given transition applies
	takes constant time, because all that is required is to test
	whether an edge with the relevant label is attached to the front nodes of the graph
	as determined by the symbol to be read by the transition. Using appropriate data structures,
	this can straightforwardly be implemented to run in constant time.
\end{proof}

\begin{example}\label{ex:transition-selection}%
	Only state $S_1$ of the DFA $\aut S'$ shown in 
	\figref{f:det-star-auto} has two outgoing transitions, $\delta_2$ and 
	$\delta_3$. We obtain
	\begin{align*}
		\VNext(\delta_2) & = (a,\{(1,1)\})&
		\VFollow(\delta_2)&=\{(a,\{(1,1)\}),(b,\{(1,1)\})\}\\
		\VNext(\delta_3) & = (b,\{(1,1)\})&
		\VFollow(\delta_3)&=\{(b,\{(1,1)\})\}
	\end{align*}
	by applying \eqref{e:def-next} with \eqref{e:def-xi} and 
	using \algref{alg:follow}, that is, $\delta_2\prec_{S_1}\delta_3$ and 
	$\delta_3\not\prec_{S_1}\delta_2$, that is, $\aut S'$ has the TS
	property. We can thus choose 
	$\delta_2\sqsubset_{S_1}\delta_3$, that is, one must try $\delta_2$ first
	when one has reached $S_1$, and $\delta_3$ only if $\delta_2$ is not
	applicable, that is, if there is no remaining outgoing edge with label $a$.
\end{example}

\subsection{Free Edge Choice\label{sec:FEC}}
Unfortunately, the transition selection property does not necessarily prevent
the recognition process from running into dead ends, even for valid graphs. This is
the case if one has the choice between different edges to be read by a selected
transition, but not all of them are equally suited. This is demonstrated in
the following example:

\begin{example}\label{ex:fec}
  \begin{figure}[tb]
    \begin{automaton}
      \node[state,initial] (S_0) {$S_0$};
      \node[state] (S_1) [right=of S_0] {$S_1$};
      \node[state](S_2) [right=of S_1] {$S_2$};
      \node[state,accepting](S_f) [right=of S_2] {$S_f$};
      \path[->]
        (S_0) edge[bend angle=45,bend left] node[above] {$a^1_{12}$}
                        node[below,text=\Red] {$\delta_1$} (S_1)
        (S_1)  edge[bend angle=45,bend left] node[below] {$b^{31}_3$}
                        node[above,text=\Red] {$\delta_2$} (S_0)
                 edge node[above] {$c^{31}_2$}
                        node[below,text=\Red] {$\delta_3$} (S_2)
        (S_2) edge node[above] {$\emptyseq^{(1)}$}
                        node[below,text=\Red] {$\delta_4$} (S_f)
        ;
    \end{automaton}
    \hfill
    \begin{graph}[x=10mm,y=-5mm]
      \inode(v)(2,0) 
      \inode(w)(3,0) 
      \fnode(x)(1,1) 
      \inode(y)(2,2) 
      \rnode(z)(3,2) 
      \path
      (v) edge[->] node[above] {\small$b$} (w)
      (x) edge[->] node[above] {\small$a$} (v)
           edge[->] node[below] {\small$a$} (y)
      (y) edge[->] node[below] {\small$c$} (z)
      ;
    \end{graph}
    \hfill
    \begin{graph}[x=10mm,y=-5mm]
      \inode(v)(2,0) 
      \inode(w)(3,0) 
      \fnode(x)(1,1) 
      \Fnode(y)(2,2){1}
      \rnode(z)(3,2) 
      \path
      (v) edge[->] node[above] {\small$b$} (w)
      (x) edge[->] node[above] {\small$a$} (v)
      (y) edge[->] node[below] {\small$c$} (z)
      ;
    \end{graph} \\[+2.7mm]
    \caption{Finite automaton $\aut F$ and two graphs used in \exref{ex:fec}.}
    \label{f:fec}
  \end{figure}
  Let $a$, $b$, and $c$ be symbols of rank 2. 
  The DFA $\aut F$ shown in \figref{f:fec} always
  allows to select a unique transition: $\delta_1$ in state $S_0$,
  $\delta_2$ in $S_1$ if the second front interface node has an
  outgoing $b$-edge, and $\delta_3$ if it has an outgoing
  $c$-edge. The left graph in \figref{f:fec} can thus be recognized by
  first reading the edge from $x$ to $v$ in state $S_0$ and then
  following the uniquely determined moves.

  However, the recognition process will run into a dead end if it
  reads the edge from $x$ to $y$ first. This move then yields a
  configuration $(S_1,G')$ where $G'$ is the right graph in
  \figref{f:fec}. The second front interface node ($y$) has an
  outgoing $c$-edge, but no $b$-edge, that is, $\delta_3$ is
  selected. This transition tries to find a composition
  $G'=C\gcomp G''$ with $C\iso\sem{c^{31}_2}$, that is, $\front_C=xy$,
  $\rear_C=z$, and thus $x\notin\nd G''$, which would leave the edge
  from $x$ to $v$ dangling. Hence, $\delta_3$ cannot be applied which
  leaves the recognition process stuck in a dead end.\qed
\end{example}

Such a situation cannot happen if every edge that can be chosen to be
read next allows to continue the recognition process until an
accepting configuration is reached, or none of them does. We say that the next edge can be
freely chosen:

\begin{defn}[Free Edge Choice Property]\label{d:fec} A finite automaton $\aut A
		= (\GVoc, Q, \Delta, q_0, F)$ has the \emph{free edge choice} (\emph{FEC})
	\emph{property} if $(q',H')\in\CorrectConf{\aut A}$ implies
	$(q',H'')\in\CorrectConf{\aut A}$ for all move sequences
	$(q_0,G)\gmove_{\aut A}^\ast(q,H)\gmove_\delta(q',H')$ and
	$(q,H)\gmove_\delta(q',H'')$ where $\delta\in\Delta$.%
	\footnote{Recall that $\CorrectConf{\aut A}$ is the set of all 
		acceptable graph configurations of~$\aut A$, that is, those configurations from 
		where an accepting configuration can be reached (see \defref{d:graph-conf}).}
\end{defn}
	
By the discussion above, the recognition process using an
automaton with the TS property and the FEC
property will never run into a dead end when applied to a valid graph. It
remains to show how the FEC property can be checked for a given automaton. We present
a sufficient condition that is easy to test. Thus, an automaton is guaranteed to
have the FEC property if it passes the test, but the converse is not necessarily
true.

Consider a situation during the recognition of a valid graph when a transition
has been selected for the next move, and there are several edges one can choose
from. All the edges that are not being chosen now must be read later during the
recognition process. This is only possible for certain transitions. Let
us call these transitions \emph{deferrable} because one edge can be read now
whereas reading of the others is deferred:

A transition $\delta\in\Delta$ is called \emph{deferrable} if and only if there
are move sequences
\begin{align}
	                & (q,G)\gmove_\delta(q',G') \label{e:defer1}                                                \\
	\text{and}\quad & (q,G)\gmove_\delta(q',G'')\gmove_{\aut A}^\ast(s,H)\gmove_{\aut A}(s',H')\label{e:defer2}
\end{align}
such that the edge read in \eqref{e:defer1} and the one read in the last move of
\eqref{e:defer2} are the same, that is $\ed G'\setminus\ed G=\ed H'\setminus\ed
	H$.

One can easily check with an algorithm similar to \algref{alg:follow} whether a
transition is deferrable. With this information, one can then check
whether $\aut A$ has the FEC property:

\begin{lemma}\label{l:fec}%
	A finite automaton $\aut A$ has the free edge choice property if
	$[\rho]\subseteq[\phi]$ holds for every transition $(q,a^\phi_\rho,q')$ of
	$\aut A$ that is deferrable.
\end{lemma}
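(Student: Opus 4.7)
The plan is to show that the condition forces $H'\iso H''$ whenever two moves $(q,H)\gmove_\delta(q',H')$ and $(q,H)\gmove_\delta(q',H'')$ use the same transition $\delta=(q,a^\phi_\rho,q')$ on $(q,H)$ but read different edges $e$ and $e''$. Since the $\gmove$ relation is invariant under isomorphism of graphs with interfaces, this immediately yields $(q',H'')\in\CorrectConf{\aut A}$ whenever $(q',H')\in\CorrectConf{\aut A}$, establishing the FEC property (the case $e=e''$ being trivial).

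First I would establish that $\delta$ is deferrable. Any accepting computation from $(q',H')$ must consume every edge of $H'$, and since $e\neq e''$ the edge $e''$ is still present in $H'$, hence read at some step. Truncating that computation immediately after the move reading $e''$ produces a witness for \eqref{e:defer2}, while the alternative direct move $(q,H)\gmove_\delta(q',H'')$ reading $e''$ plays the role of \eqref{e:defer1}. The hypothesis of the lemma then yields $[\rho]\subseteq[\phi]$ for $\delta$.

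With this inclusion in hand I would analyse the decompositions $H=G_a\gcomp H'=G_{a''}\gcomp H''$. Writing $V^e$ and $V^{e''}$ for the sets of nodes attached to $e$ and $e''$ in $H$, the condition $[\rho]\subseteq[\phi]$ forces every node shared between $G_a$ and $H'$ to be a front node of $H$; in particular, the non-front attachments $V^e\setminus[\front_H]$ are interior to $G_a$, and since $G_a$ contains no edge other than $e$ they cannot be attached to any other edge of $H$. The symmetric statement holds for $V^{e''}\setminus[\front_H]$. Moreover, for every $k\in[\phi]$ both $\att_H(e,k)$ and $\att_H(e'',k)$ equal the same front node $\front_H(\phi^{-1}(k))$, the new front interfaces coincide as sequences via $\front_{H'}(j)=\front_H(\phi^{-1}(\rho(j)))=\front_{H''}(j)$, and the rear interfaces agree because composition preserves the rear.

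The isomorphism $m\colon H'\to H''$ that I would then construct maps $e''\in\ed{H'}$ to $e\in\ed{H''}$ (identity on all other edges), maps each $\att_H(e'',k)$ with $k\in[\rank(a)]\setminus[\phi]$ to $\att_H(e,k)$, and acts as the identity on all other nodes. The main obstacle is the careful verification that $m$ is a well-defined bijection preserving labels, attachments and interfaces: the nontrivial parts are that $V^e\setminus[\front_H]$ and $V^{e''}\setminus[\front_H]$ are disjoint (each touches only its own edge under the condition) and that attached nodes of any other edge of $H$ avoid both sets, so $m$ really is the identity on those attachments. Once $H'\iso H''$ is established, iso-invariance of acceptance completes the argument.
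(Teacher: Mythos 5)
Your proposal is correct and follows essentially the same route as the paper's proof: establish that $\delta$ is deferrable because the other edge must still be read in the accepting continuation from $(q',H')$, invoke the hypothesis to get $[\rho]\subseteq[\phi]$, observe that this makes the non-front attachment nodes of each of the two competing edges private to that edge, and conclude $H'\iso H''$ via the isomorphism swapping the two edges, whence acceptability transfers. The only differences are cosmetic — the paper phrases the argument as a contradiction and leaves the isomorphism implicit, whereas you argue directly and spell out its construction.
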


\begin{proof}
	Consider a finite automaton $\aut A = (\GVoc, Q, \Delta, q_0, F)$ that
	satisfies the condition in the lemma. Let us assume that $\aut A$ does not
	have the free edge choice property, that is, there are move sequences
	$(q_0,G)\gmove_{\aut A}^\ast(q,H)\gmove_\delta(q',H')\in\CorrectConf{\aut
			A}$ and $(q,H)\gmove_\delta(q',H'')\notin\CorrectConf{\aut A}$. The two
	moves using $\delta$ must have read two different edges, say $e$ and $e'$,
	that is, $H=F\gcomp H'=F'\gcomp H''$, $\ed F=\{e\}$, and $\ed F'=\{e'\}$.
	$H'$ then still contains $e'$, which is read later, because
	$(q',H')\in\CorrectConf{\aut A}$. Consequently, there is a move sequence
	$(q',H')\gmove_{\aut A}^\ast(s,I)\gmove_{\aut A}(s',I')\in\CorrectConf{\aut
			A}$ such that $\ed I\setminus\ed I'=\{e'\}$, that is, $\delta$ is
	deferrable. Now let $\delta=(q,a^\phi_\rho,q')$. We have
	$a=\lab_G(e)=\lab_G(e')$ since $e$ and $e'$ can both be read by $\delta$. And we
	have $[\rho]\subseteq[\phi]$ since $\delta$ is deferrable, and $\aut A$
	satisfies the condition in the lemma. Note that
	$\front_H=\front_F=\front_{F'}$. Thus $[\rho]\subseteq[\phi]$ implies
	$\rear_F=\rear_{F'}=\front_{H'}=\front_{H''}$. Consequently, all nodes of
	$H$ that are attached to $e$ or $e'$, but that are not in $\front_H$, cannot
	be attached to any other edge, and thus $H'\iso H''$, where the isomorphism
	maps $e$ to $e'$ (and vice versa). Then $(q',H'')\in\CorrectConf{\aut A}$
	follows from $(q',H')\in\CorrectConf{\aut A}$, contradicting the assumption.
	$\aut A$ hence has the free edge property.
\end{proof}

\begin{example}
Transition $\delta_1$ is the only deferrable transition of automaton $\aut F$ in 
\figref{f:fec}, but it violates the condition of \lemmaref{l:fec} 
as it reads $a^1_{12}$, and $[\rho]=\{1,2\}\not\subseteq\{1\}=[\phi]$.
Consequently, $\aut F$ does not pass the FEC test, which leaves the question open 
whether $\aut F$ has the FEC property or not. In fact, we have already 
seen in \exref{ex:fec} that $\aut F$ does not have the FEC property.

The automaton $\aut S'$ in \figref{f:det-star-auto}, however, has the FEC property
according to \lemmaref{l:fec} since $\delta_2$ is its only deferrable 
transition, and $[\rho]=\{1\}\subseteq\{1\}=[\phi]$ for the graph symbol
$a^1_1$ read by~$\delta_2$. Since $\aut S'$ also has the TS 
property (see \exref{ex:transition-selection}), one can use $\aut S'$ for recognizing
star graphs without any backtracking.
\end{example}

\subsection{Results}\label{s:results}%
The following theorem summarizes the findings of this paper.

\begin{theorem}\label{t:mainresults}
\begin{enumerate}
\item Every finite automaton over the canonical alphabet can effectively be turned into a
DFA.\footnote{As usual, and unavoidably, the powerset
construction can cause an exponential blow-up in the size of the automaton.}
\item Given a DFA $\aut A$ (of the type produced by
\algref{alg:powerset}), it can be decided in polynomial time (in the number 
of transitions of $\aut A$) whether $\aut A$
has the TS and the FEC property. 
\item If $\aut A$ has the TS and the FEC property, the membership problem for $\L(\aut A)$
can be decided in linear time.
\end{enumerate}
\end{theorem}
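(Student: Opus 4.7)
The plan is to address the three parts separately, assembling each one from results already established earlier in the paper.

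For part~1, the strategy is a two-stage construction. First I would apply the refactoring procedure sketched at the end of \sectref{sec:deterministic}: as long as the automaton contains two distinct symbols $a,a' \in \GVoc \setminus \Blanks$ with $a\beta \sim a'\beta'$ for some blanks, split the offending transitions through a fresh intermediate state so that the ``common prefix'' $b^\phi_{\hat\rho}$ is read first and the differing ``suffixes'' $\hat\beta,\hat\beta'$ afterwards. Because each split strictly reduces the number of conflicting pairs among labels actually used, the process terminates at an unambiguous equivalent automaton. Then I would invoke \algref{alg:powerset} and combine the correctness theorem of that construction with \thmref{thm:deterministic} to conclude that the result is a DFA equivalent to the original.

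For part~2, I would give a concrete complexity analysis. Let $n = |\Delta|$. \algref{alg:follow} runs in polynomial time in $n$: the work set $\VWork$ contains pairs $(\delta,\mu)$ with $\mu$ a partial function between two finite index sets whose sizes are bounded by the maximum rank in $\aut A$, so only polynomially many pairs can ever enter $\VDone$, and each iteration does polynomial work. Running it once per transition yields $\VFollow(\delta)$ for all $\delta$ in polynomial time. From \eqref{e:def-next} and \eqref{e:def-prec} the relations $\prec_q$ are then directly readable, and irreflexivity of $\prec_q^+$ for every $q$ is just an acyclicity test on a directed graph with at most $n$ nodes; this settles the TS property. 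For the FEC property I would use the sufficient criterion of \lemmaref{l:fec}: deferrability of a single transition can be tested by an adaptation of \algref{alg:follow} (tracking which edges are revisited), again in polynomial time, and the condition $[\rho] \subseteq [\phi]$ is checked locally for each deferrable transition.

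For part~3, I would show that recognition proceeds in a uniquely determined deterministic walk through the DFA. Suppose $\aut A$ has both properties and let $G$ be the input. At each configuration $(q,H)$, by \lemmaref{l:ts} one can, in constant time, identify the unique outgoing transition $\delta$ that does not lead into a dead end, if any. By the FEC property, any edge of $H$ that fits $\delta$ may be chosen, so the next move can be performed in constant time (assuming standard adjacency data structures indexed by front nodes and labels). Since each non-blank move consumes exactly one edge of $G$ and the final blank transition into $F$ is taken at most once, the number of moves is $|\ed G| + O(1)$, giving total time linear in the size of $G$. Conversely, if at some step no transition applies, the input is rejected; by TS and FEC this verdict is correct. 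The main obstacle to making this fully rigorous is the constant-time move claim, which requires spelling out appropriate data structures so that testing applicability of a transition and locating a suitable edge is truly $O(1)$; this is routine but needs care with how front interfaces and incident-edge lookups are represented.
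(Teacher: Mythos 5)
Your proposal is correct and follows essentially the same route as the paper's own proof: part~1 via the disambiguation procedure plus \thmref{thm:deterministic}, part~2 via \algref{alg:follow}, the acyclicity test for $\prec_q^+$, and the deferrability test feeding \lemmaref{l:fec}, and part~3 via repeated application of \lemmaref{l:ts}. Your version merely spells out the complexity bookkeeping and the data-structure caveat in more detail than the paper does.
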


\begin{proof}
The first statement follows from \thmref{thm:deterministic} and the previously
mentioned fact that ambiguous automata can be made unambiguous. 
Deciding the TS property requires polynomial time because  \algref{alg:follow} 
and topological sorting must be used for checking irreflexivity of $\prec_q^+$
for every state $q$ of $\aut A$. The second statement then
follows from \lemmaref{l:fec} and the fact that it is decidable in polynomial time
whether a transition is deferrable (see above). Finally, to decide whether a graph
belongs to $\L(\aut A)$, we can repeatedly apply transitions using \lemmaref{l:ts}.
\end{proof}

The concepts described in this paper have been realized in a graphical tool. It
reads in a finite automaton over the canonical alphabet, or creates it from a regular expression in the
obvious way, turns it into a DFA, and uses it for
recognizing input graphs if the DFA has the TS and the FEC
property. \figref{f:tool} shows a screenshot with an input graph in the
top-left window, and the DFA in the top-right window after the DFA has recognized
the input graph. The bottom window shows how the input graph has been composed
from basic graphs such that it has been accepted by the DFA. 
\begin{figure}[t]
\centering\includegraphics[width=\textwidth]{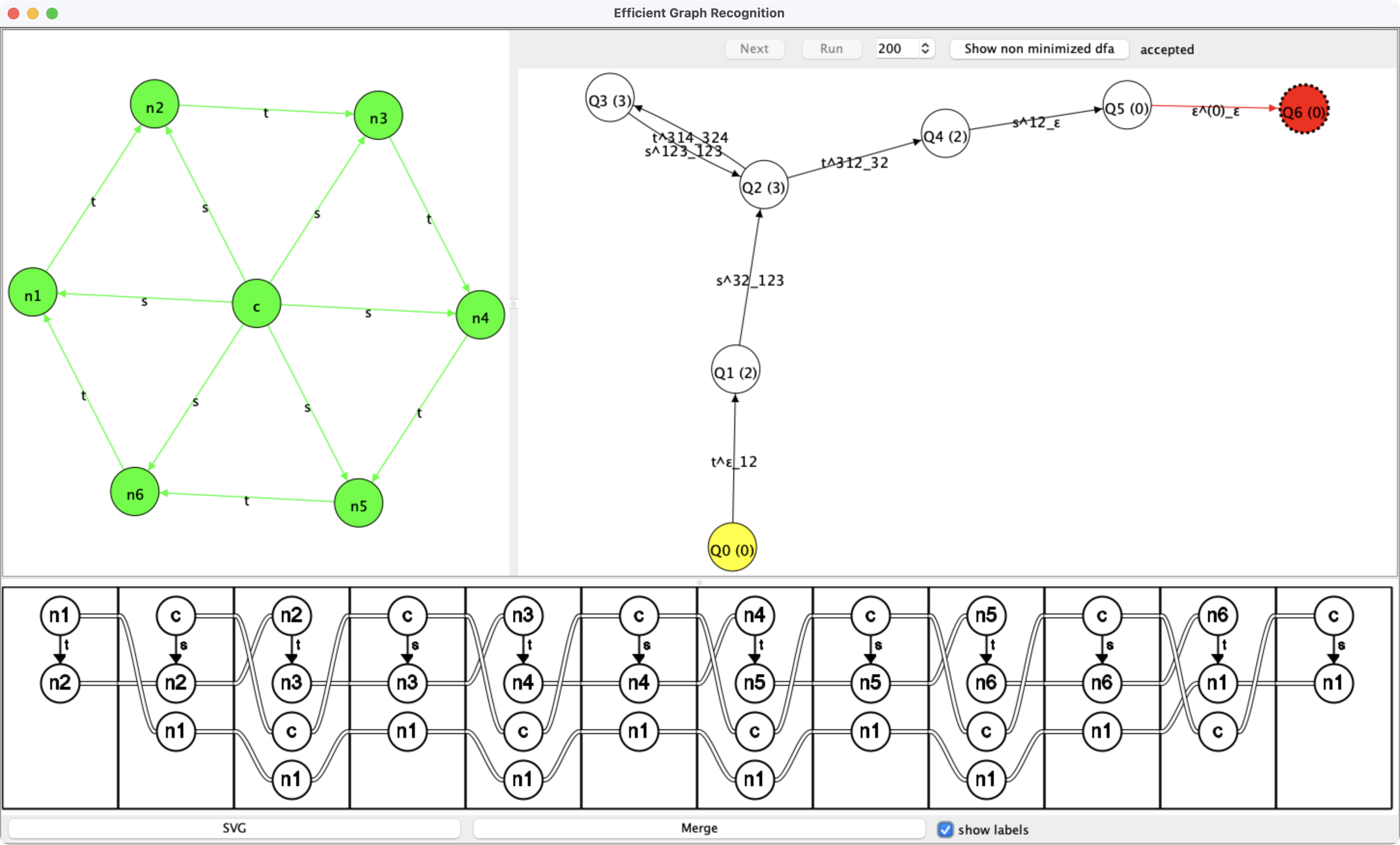}
\caption{Screenshot of a tool realizing efficient graph recognition based on 
         finite automata.}
\label{f:tool}
\end{figure}

\tableref{t:examples} shows some example graph languages that can be specified
and efficiently recognized by finite automata: \emph{Stars} is the language of
stars used in this paper (\exref{ex:star-auto}), \emph{Wheels} contains wheel
graphs introduced in \cite[p.~92]{Habel:92}, \emph{Palindromes} contains all
palindromes over $\{a,b\}$, and $a^nb^nc^n$ the language $\{a^nb^nc^n\mid n>0\}$
as string graphs~\cite[Ex.~2.4]{Habel:92}. \tableref{t:examples} shows the
type of the contained graphs, that is, the size of their front and rear
interfaces, and the size of their DFA in terms of the number of states and
transitions. Moreover, \emph{max.~\#nodes} shows the number of nodes of 
the largest atom used in the DFA. Each of the automata has the TS and FEC property.
\begin{table}[t]
\caption{Example graph languages and their finite automata.}
\label{t:examples}
\centering
\begin{tabular}{lcccccc}
\toprule
Language & Type & \#states & \#transitions & max.~\#nodes & TS & FEC \\ \midrule
Stars & $(1,1)$ & 4 & 4 & 2 & $\checked$ & $\checked$ \\
Wheels & $(1,0)$ & 7 & 7 & 4 & $\checked$ & $\checked$ \\
Palindromes & $(2,0)$ &  5 & 7 & 3 & $\checked$ & $\checked$ \\
$a^nb^nc^n$ & $(2,0)$ & 9 & 10 & 4 & $\checked$ & $\checked$ \\\bottomrule
\end{tabular}
\end{table}

\figref{f:tool} shows the DFA for wheels and a wheel with six
spokes (\textsf{s}-edges for spokes and \textsf{t}-edges for the tread of 
the wheel). The front and rear interfaces of the wheel are empty. As shown in 
the bottom window, the left-most atom refers to the \textsf{t}-edge from 
\textsf{n1} to \textsf{n2}, and recognition of the wheel has 
started with this part of the tread of the wheel. Recognition could 
have started with any of the \textsf{t}-edges because the DFA has the FEC 
property. However, the FEC test fails for this DFA because of the 
first transition from \textsf{Q0} to \textsf{Q1} with label 
\textsf{t\^{}$\emptyseq$\_12}, which represents $t^\emptyseq_{12}$. This 
transition is deferrable, yet $\{1,2\}\not\subseteq\emptyset$. Hence,
\emph{Wheels} is an example where the simple FEC test fails although the
automaton does have the FEC property.

\section{Conclusions}
\label{s:concl}
In this paper we have defined finite automata for graph recognition,
devised a modified powerset construction to make them deterministic,
and stated two criteria (transition selection and free edge choice)
under which these automata work without backtracking.

This paper leaves some questions open.
Do (nondeterministic)
finite graph automata recognize NP-complete languages, which exist for
(context-free) hyperedge replacement
grammars?
How do the languages accepted by finite graph automata relate to those
definable with the ``regular''  graph grammars of Gilroy \emph{et al.}
\cite{Gilroy-Lopez-Maneth:17}, which have an efficient parsing
algorithm (not defined with automata) that also requires free edge
choice? (Probably these language classes are incomparable.)

We intend to continue our work in several directions.
First, it is easy to define graph expressions defining graph languages
by composition, Kleene star, and union. These expressions could be used
in the nested graph conditions of Habel and Pennemann
\cite{Habel-Pennemann:05,Pennemann:09} to specify global graph
properties \cite{Gaifman:82}, and check them with automata.
Second, we could consider borrowed nodes in graph expressions and
graph automata that can be contracted with other nodes in a graph \cite{Drewes-Hoffmann-Minas:22}.
Then these mechanisms allow to define graphs of unbounded
treewidth. 

\paragraph{Acknowledgments.}
	 We are very grateful to the anonymous reviewers for their 
	 helpful comments and suggestions, which have helped to 
	 improve the quality of this paper.

\bibliographystyle{eptcs} 
\bibliography{References}

\begin{thebibliography}{10}
\providecommand{\bibitemdeclare}[2]{}
\providecommand{\surnamestart}{}
\providecommand{\surnameend}{}
\providecommand{\urlprefix}{Available at }
\providecommand{\url}[1]{\texttt{#1}}
\providecommand{\href}[2]{\texttt{#2}}
\providecommand{\urlalt}[2]{\href{#1}{#2}}
\providecommand{\doi}[1]{doi:\urlalt{https://doi.org/#1}{#1}}
\providecommand{\eprint}[1]{arXiv:\urlalt{https://arxiv.org/abs/#1}{#1}}
\providecommand{\bibinfo}[2]{#2}

\bibitemdeclare{article}{Aalbersberg-Ehrenfeucht-Rozenberg:86}
\bibitem{Aalbersberg-Ehrenfeucht-Rozenberg:86}
\bibinfo{author}{\IJ.J. \surnamestart Aalbersberg\surnameend},
  \bibinfo{author}{Andrzej \surnamestart Ehrenfeucht\surnameend} \&
  \bibinfo{author}{Grzegorz \surnamestart Rozenberg\surnameend}
  (\bibinfo{year}{1986}): \emph{\bibinfo{title}{On the Membership Problem for
  Regular {DNLC} Grammars}}.
\newblock {\slshape \bibinfo{journal}{Discrete Applied Mathematics}}
  \bibinfo{volume}{13}, pp. \bibinfo{pages}{79--85},
  \doi{10.1016/0166-218X(86)90070-3}.

\bibitemdeclare{article}{Arbib-Giveon:68}
\bibitem{Arbib-Giveon:68}
\bibinfo{author}{Michael~A. \surnamestart Arbib\surnameend} \&
  \bibinfo{author}{Yehoshafat \surnamestart Give'on\surnameend}
  (\bibinfo{year}{1968}): \emph{\bibinfo{title}{Algebra Automata {I:} Parallel
  Programming as a Prolegomena to the Categorical Approach}}.
\newblock {\slshape \bibinfo{journal}{Information and Control}}
  \bibinfo{volume}{12}(\bibinfo{number}{4}), pp. \bibinfo{pages}{331--345},
  \doi{10.1016/S0019-9958(68)90374-4}.

\bibitemdeclare{article}{Blume-Bruggink-Friedrich-Koenig:13}
\bibitem{Blume-Bruggink-Friedrich-Koenig:13}
\bibinfo{author}{Christoph \surnamestart Blume\surnameend},
  \bibinfo{author}{H.J.~Sander \surnamestart Bruggink\surnameend},
  \bibinfo{author}{Martin \surnamestart Friedrich\surnameend} \&
  \bibinfo{author}{Barbara \surnamestart K{\"o}nig\surnameend}
  (\bibinfo{year}{2013}): \emph{\bibinfo{title}{Treewidth, Pathwidth and Cospan
  Decompositions with Applications to Graph-Accepting Tree Automata}}.
\newblock {\slshape \bibinfo{journal}{Journal of Visual Languages \&
  Computing}} \bibinfo{volume}{24}(\bibinfo{number}{3}), pp.
  \bibinfo{pages}{192--206}, \doi{10.1016/j.jvlc.2012.10.002}.

\bibitemdeclare{article}{Bozapalidis-Kalampakas:06}
\bibitem{Bozapalidis-Kalampakas:06}
\bibinfo{author}{Symeon \surnamestart Bozapalidis\surnameend} \&
  \bibinfo{author}{Antonios \surnamestart Kalampakas\surnameend}
  (\bibinfo{year}{2006}): \emph{\bibinfo{title}{Recognizability of graph and
  pattern languages}}.
\newblock {\slshape \bibinfo{journal}{Acta Informatica}}
  \bibinfo{volume}{42}(\bibinfo{number}{8-9}), pp. \bibinfo{pages}{553--581},
  \doi{10.1007/s00236-006-0006-z}.

\bibitemdeclare{article}{Bozapalidis-Kalampakas:08}
\bibitem{Bozapalidis-Kalampakas:08}
\bibinfo{author}{Symeon \surnamestart Bozapalidis\surnameend} \&
  \bibinfo{author}{Antonios \surnamestart Kalampakas\surnameend}
  (\bibinfo{year}{2008}): \emph{\bibinfo{title}{Graph automata}}.
\newblock {\slshape \bibinfo{journal}{Theoretical Computer Science}}
  \bibinfo{volume}{393}(\bibinfo{number}{1-3}), pp. \bibinfo{pages}{147--165},
  \doi{10.1016/j.tcs.2007.11.022}.

\bibitemdeclare{article}{Brandenburg-Skodinis:05}
\bibitem{Brandenburg-Skodinis:05}
\bibinfo{author}{Franz{-}Josef \surnamestart Brandenburg\surnameend} \&
  \bibinfo{author}{Konstantin \surnamestart Skodinis\surnameend}
  (\bibinfo{year}{2005}): \emph{\bibinfo{title}{Finite graph automata for
  linear and boundary graph languages}}.
\newblock {\slshape \bibinfo{journal}{Theoretical Computer Science}}
  \bibinfo{volume}{332}(\bibinfo{number}{1-3}), pp. \bibinfo{pages}{199--232},
  \doi{10.1016/j.tcs.2004.09.040}.

\bibitemdeclare{article}{Bruggink-Koenig:18}
\bibitem{Bruggink-Koenig:18}
\bibinfo{author}{H.~J.~Sander \surnamestart Bruggink\surnameend} \&
  \bibinfo{author}{Barbara \surnamestart K{\"{o}}nig\surnameend}
  (\bibinfo{year}{2018}): \emph{\bibinfo{title}{Recognizable languages of
  arrows and cospans}}.
\newblock {\slshape \bibinfo{journal}{Math. Struct. Comput. Sci.}}
  \bibinfo{volume}{28}(\bibinfo{number}{8}), pp. \bibinfo{pages}{1290--1332},
  \doi{10.1017/S096012951800018X}.

\bibitemdeclare{inproceedings}{chiang-et-al:2013}
\bibitem{chiang-et-al:2013}
\bibinfo{author}{David \surnamestart Chiang\surnameend}, \bibinfo{author}{Jacob
  \surnamestart Andreas\surnameend}, \bibinfo{author}{Daniel \surnamestart
  Bauer\surnameend}, \bibinfo{author}{Karl~Moritz \surnamestart
  Hermann\surnameend}, \bibinfo{author}{Bevan \surnamestart Jones\surnameend}
  \& \bibinfo{author}{Kevin \surnamestart Knight\surnameend}
  (\bibinfo{year}{2013}): \emph{\bibinfo{title}{Parsing Graphs with Hyperedge
  Replacement Grammars}}.
\newblock In: {\slshape \bibinfo{booktitle}{Proc. 51st Annual Meeting of the
  Association for Computational Linguistics (Volume 1: Long Papers)}},
  \bibinfo{publisher}{Association for Computational Linguistics},
  \bibinfo{address}{Sofia, Bulgaria}, pp. \bibinfo{pages}{924--932}.
\newblock \urlprefix\url{https://aclanthology.org/P13-1091}.

\bibitemdeclare{article}{Courcelle:90}
\bibitem{Courcelle:90}
\bibinfo{author}{Bruno \surnamestart Courcelle\surnameend}
  (\bibinfo{year}{1990}): \emph{\bibinfo{title}{The Monadic Second-Order Logic
  of Graphs. {I}. Recognizable Sets of Finite Graphs}}.
\newblock {\slshape \bibinfo{journal}{Information and Computation}}
  \bibinfo{volume}{85}(\bibinfo{number}{1}), pp. \bibinfo{pages}{12--75},
  \doi{10.1016/0890-5401(90)90043-H}.

\bibitemdeclare{article}{Drewes:93c}
\bibitem{Drewes:93c}
\bibinfo{author}{Frank \surnamestart Drewes\surnameend} (\bibinfo{year}{1993}):
  \emph{\bibinfo{title}{Recognising $k$--connected hypergraphs in cubic time}}.
\newblock {\slshape \bibinfo{journal}{Theoretical Computer Science}}
  \bibinfo{volume}{109}, pp. \bibinfo{pages}{83--122},
  \doi{10.1016/0304-3975(93)90065-2}.

\bibitemdeclare{inproceedings}{Drewes-Hoffmann-Minas:15}
\bibitem{Drewes-Hoffmann-Minas:15}
\bibinfo{author}{Frank \surnamestart Drewes\surnameend},
  \bibinfo{author}{Berthold \surnamestart Hoffmann\surnameend} \&
  \bibinfo{author}{Mark \surnamestart Minas\surnameend} (\bibinfo{year}{2015}):
  \emph{\bibinfo{title}{Predictive Top-Down Parsing for Hyperedge Replacement
  Grammars}}.
\newblock In \bibinfo{editor}{Francesco \surnamestart
  Parisi{-}Presicce\surnameend} \& \bibinfo{editor}{Bernhard \surnamestart
  Westfechtel\surnameend}, editors: {\slshape \bibinfo{booktitle}{Graph
  Transformation - 8th International Conf.{}, {ICGT} 2015. Proceedings}},
  {\slshape \bibinfo{series}{LNCS}} \bibinfo{volume}{9151},
  \bibinfo{publisher}{Springer}, pp. \bibinfo{pages}{19--34},
  \doi{10.1007/978-3-319-21145-9\_2}.

\bibitemdeclare{inproceedings}{Drewes-Hoffmann-Minas:17}
\bibitem{Drewes-Hoffmann-Minas:17}
\bibinfo{author}{Frank \surnamestart Drewes\surnameend},
  \bibinfo{author}{Berthold \surnamestart Hoffmann\surnameend} \&
  \bibinfo{author}{Mark \surnamestart Minas\surnameend} (\bibinfo{year}{2017}):
  \emph{\bibinfo{title}{Predictive Shift-Reduce Parsing for Hyperedge
  Replacement Grammars}}.
\newblock In \bibinfo{editor}{Fabio \surnamestart Gadducci\surnameend} \&
  \bibinfo{editor}{Timo \surnamestart Kehrer\surnameend}, editors: {\slshape
  \bibinfo{booktitle}{Graph Transformation - 10th International Conf.{}, {ICGT}
  2017, Proceedings}}, {\slshape \bibinfo{series}{LNCS}}
  \bibinfo{volume}{12741}, \bibinfo{publisher}{Springer}, pp.
  \bibinfo{pages}{106--122}, \doi{10.1007/978-3-319-61470-0\_7}.

\bibitemdeclare{inproceedings}{Drewes-Hoffmann-Minas:19a}
\bibitem{Drewes-Hoffmann-Minas:19a}
\bibinfo{author}{Frank \surnamestart Drewes\surnameend},
  \bibinfo{author}{Berthold \surnamestart Hoffmann\surnameend} \&
  \bibinfo{author}{Mark \surnamestart Minas\surnameend} (\bibinfo{year}{2019}):
  \emph{\bibinfo{title}{Extending Predictive Shift-Reduce Parsing to Contextual
  Hyperedge Replacement Grammars}}.
\newblock In \bibinfo{editor}{Esther \surnamestart Guerra\surnameend} \&
  \bibinfo{editor}{Fernando \surnamestart Orejas\surnameend}, editors:
  {\slshape \bibinfo{booktitle}{Graph Transformation - 12th International
  Conf.{}, {ICGT} 2019}}, {\slshape \bibinfo{series}{LNCS}}
  \bibinfo{volume}{11629}, \bibinfo{publisher}{Springer}, pp.
  \bibinfo{pages}{55--72}, \doi{10.1007/978-3-030-23611-3\_4}.

\bibitemdeclare{article}{Drewes-Hoffmann-Minas:19}
\bibitem{Drewes-Hoffmann-Minas:19}
\bibinfo{author}{Frank \surnamestart Drewes\surnameend},
  \bibinfo{author}{Berthold \surnamestart Hoffmann\surnameend} \&
  \bibinfo{author}{Mark \surnamestart Minas\surnameend} (\bibinfo{year}{2019}):
  \emph{\bibinfo{title}{Formalization and Correctness of Predictive
  Shift-Reduce Parsers for Graph Grammars based on Hyperedge Replacement}}.
\newblock {\slshape \bibinfo{journal}{Journal on Logical and Algebraic Methods
  for Programming}} \bibinfo{volume}{104}, pp. \bibinfo{pages}{303--341},
  \doi{10.1016/j.jlamp.2018.12.006}.

\bibitemdeclare{inproceedings}{Drewes-Hoffmann-Minas:21}
\bibitem{Drewes-Hoffmann-Minas:21}
\bibinfo{author}{Frank \surnamestart Drewes\surnameend},
  \bibinfo{author}{Berthold \surnamestart Hoffmann\surnameend} \&
  \bibinfo{author}{Mark \surnamestart Minas\surnameend} (\bibinfo{year}{2021}):
  \emph{\bibinfo{title}{Rule-Based Top-Down Parsing for Acyclic Contextual
  Hyperedge Replacement Grammars}}.
\newblock In \bibinfo{editor}{Fabio \surnamestart Gadducci\surnameend} \&
  \bibinfo{editor}{Timo \surnamestart Kehrer\surnameend}, editors: {\slshape
  \bibinfo{booktitle}{Graph Transformation - 14th International Conference,
  {ICGT} 2021}}, {\slshape \bibinfo{series}{LNCS}} \bibinfo{volume}{12741},
  \bibinfo{publisher}{Springer}, pp. \bibinfo{pages}{164--184},
  \doi{10.1007/978-3-030-78946-6\_9}.

\bibitemdeclare{inproceedings}{Drewes-Hoffmann-Minas:22}
\bibitem{Drewes-Hoffmann-Minas:22}
\bibinfo{author}{Frank \surnamestart Drewes\surnameend},
  \bibinfo{author}{Berthold \surnamestart Hoffmann\surnameend} \&
  \bibinfo{author}{Mark \surnamestart Minas\surnameend} (\bibinfo{year}{2022}):
  \emph{\bibinfo{title}{Acyclic Contextual Hyperedge Replacement: Decidability
  of Acyclicity and Generative Power}}.
\newblock In \bibinfo{editor}{Nicolas \surnamestart Behr\surnameend} \&
  \bibinfo{editor}{Daniel \surnamestart Str{\"{u}}ber\surnameend}, editors:
  {\slshape \bibinfo{booktitle}{Graph Transformation - 15th International
  Conference, {ICGT} 2022}}, {\slshape \bibinfo{series}{LNCS}}
  \bibinfo{volume}{13349}, \bibinfo{publisher}{Springer}, pp.
  \bibinfo{pages}{3--19}, \doi{10.1007/978-3-031-09843-7\_1}.

\bibitemdeclare{article}{Engelfriet-Vereijken:97}
\bibitem{Engelfriet-Vereijken:97}
\bibinfo{author}{Joost \surnamestart Engelfriet\surnameend} \&
  \bibinfo{author}{Jan~Joris \surnamestart Vereijken\surnameend}
  (\bibinfo{year}{1997}): \emph{\bibinfo{title}{Context-Free Graph Grammars and
  Concatenation of Graphs}}.
\newblock {\slshape \bibinfo{journal}{Acta Informatica}}
  \bibinfo{volume}{34}(\bibinfo{number}{10}), pp. \bibinfo{pages}{773--803},
  \doi{10.1007/s002360050106}.

\bibitemdeclare{inproceedings}{Gadducci-Heckel:97}
\bibitem{Gadducci-Heckel:97}
\bibinfo{author}{Fabio \surnamestart Gadducci\surnameend} \&
  \bibinfo{author}{Reiko \surnamestart Heckel\surnameend}
  (\bibinfo{year}{1997}): \emph{\bibinfo{title}{An inductive view of graph
  transformation}}.
\newblock In \bibinfo{editor}{Francesco \surnamestart
  Parisi{-}Presicce\surnameend}, editor: {\slshape \bibinfo{booktitle}{Recent
  Trends in Algebraic Development Techniques, 12th International Workshop,
  WADT'97}}, {\slshape \bibinfo{series}{LNCS}} \bibinfo{volume}{1376},
  \bibinfo{publisher}{Springer}, pp. \bibinfo{pages}{223--237},
  \doi{10.1007/3-540-64299-4\_36}.

\bibitemdeclare{inproceedings}{Gaifman:82}
\bibitem{Gaifman:82}
\bibinfo{author}{Haim \surnamestart Gaifman\surnameend} (\bibinfo{year}{1982}):
  \emph{\bibinfo{title}{On local and non-local properties}}.
\newblock In \bibinfo{editor}{J.~\surnamestart Stern\surnameend}, editor:
  {\slshape \bibinfo{booktitle}{Proc.\ of the Herbrand Symposium, Logic
  Colloquium}}, {\slshape \bibinfo{series}{Studies in Logic and the Foundations
  of Mathematics}} \bibinfo{volume}{105}, \bibinfo{publisher}{North-Holland},
  pp. \bibinfo{pages}{105--135}, \doi{10.1016/S0049-237X(08)71879-2}.

\bibitemdeclare{inproceedings}{Gilroy-Lopez-Maneth:17}
\bibitem{Gilroy-Lopez-Maneth:17}
\bibinfo{author}{Sorcha \surnamestart Gilroy\surnameend}, \bibinfo{author}{Adam
  \surnamestart Lopez\surnameend} \& \bibinfo{author}{Sebastian \surnamestart
  Maneth\surnameend} (\bibinfo{year}{2017}): \emph{\bibinfo{title}{Parsing
  Graphs with Regular Graph Grammars}}.
\newblock In: {\slshape \bibinfo{booktitle}{Proceedings of the 6th Joint
  Conference on Lexical and Computational Semantics (*{SEM} 2017)}},
  \bibinfo{publisher}{Association for Computational Linguistics},
  \bibinfo{address}{Vancouver, Canada}, pp. \bibinfo{pages}{199--208},
  \doi{10.18653/v1/S17-1024}.

\bibitemdeclare{inproceedings}{Gilroy-Maneth-et-al:17}
\bibitem{Gilroy-Maneth-et-al:17}
\bibinfo{author}{Sorcha \surnamestart Gilroy\surnameend}, \bibinfo{author}{Adam
  \surnamestart Lopez\surnameend}, \bibinfo{author}{Sebastian \surnamestart
  Maneth\surnameend} \& \bibinfo{author}{Pijus \surnamestart
  Simonaitis\surnameend} (\bibinfo{year}{2017}):
  \emph{\bibinfo{title}{(Re)introducing Regular Graph Languages}}.
\newblock In \bibinfo{editor}{Makoto \surnamestart Kanazawa\surnameend},
  \bibinfo{editor}{Philippe \surnamestart de~Groote\surnameend} \&
  \bibinfo{editor}{Mehrnoosh \surnamestart Sadrzadeh\surnameend}, editors:
  {\slshape \bibinfo{booktitle}{Proceedings of the 15th Meeting on the
  Mathematics of Language, {MOL} 2017, London, UK, July 13-14, 2017}},
  \bibinfo{publisher}{{ACL}}, pp. \bibinfo{pages}{100--113},
  \doi{10.18653/v1/w17-3410}.

\bibitemdeclare{article}{Giveon-Arbib:68}
\bibitem{Giveon-Arbib:68}
\bibinfo{author}{Yehoshafat \surnamestart Give'on\surnameend} \&
  \bibinfo{author}{Michael~A. \surnamestart Arbib\surnameend}
  (\bibinfo{year}{1968}): \emph{\bibinfo{title}{Algebra Automata {II:} The
  Categorical Framework for Dynamic Analysis}}.
\newblock {\slshape \bibinfo{journal}{Information and Control}}
  \bibinfo{volume}{12}(\bibinfo{number}{4}), pp. \bibinfo{pages}{346--370},
  \doi{10.1016/S0019-9958(68)90381-1}.

\bibitemdeclare{book}{Habel:92}
\bibitem{Habel:92}
\bibinfo{author}{Annegret \surnamestart Habel\surnameend}
  (\bibinfo{year}{1992}): \emph{\bibinfo{title}{Hyperedge Replacement: Grammars
  and Languages}}.
\newblock {\slshape \bibinfo{series}{LNCS}} \bibinfo{volume}{643},
  \bibinfo{publisher}{Springer}, \doi{10.1007/BFb0013875}.

\bibitemdeclare{inproceedings}{Habel-Pennemann:05}
\bibitem{Habel-Pennemann:05}
\bibinfo{author}{Annegret \surnamestart Habel\surnameend} \&
  \bibinfo{author}{Karl-Heinz \surnamestart Pennemann\surnameend}
  (\bibinfo{year}{2005}): \emph{\bibinfo{title}{Nested Constraints and
  Application Conditions for High-Level Structures}}.
\newblock In \bibinfo{editor}{H.-J. \surnamestart Kreowski\surnameend} et~al.,
  editors: {\slshape \bibinfo{booktitle}{Formal Methods in Software and System
  Modeling}}, {\slshape \bibinfo{series}{LNCS}} \bibinfo{volume}{3393},
  \bibinfo{publisher}{Springer}, pp. \bibinfo{pages}{293--308},
  \doi{10.1007/978-3-540-31847-7\_17}.

\bibitemdeclare{inproceedings}{Hoffmann-Minas:18}
\bibitem{Hoffmann-Minas:18}
\bibinfo{author}{Berthold \surnamestart Hoffmann\surnameend} \&
  \bibinfo{author}{Mark \surnamestart Minas\surnameend} (\bibinfo{year}{2017}):
  \emph{\bibinfo{title}{Generating Efficient Predictive Shift-Reduce Parsers
  for Hyperedge Replacement Grammars}}.
\newblock In \bibinfo{editor}{M.~\surnamestart Seidl\surnameend} \&
  \bibinfo{editor}{S.~\surnamestart Zschaler\surnameend}, editors: {\slshape
  \bibinfo{booktitle}{STAF 2017 Workshops}}, {\slshape \bibinfo{series}{LNCS}}
  \bibinfo{volume}{10748}, \bibinfo{publisher}{Springer}, pp.
  \bibinfo{pages}{76--91}, \doi{10.1007/978-3-319-74730-9\_7}.

\bibitemdeclare{inproceedings}{Kalampakas:11}
\bibitem{Kalampakas:11}
\bibinfo{author}{Antonios \surnamestart Kalampakas\surnameend}
  (\bibinfo{year}{2011}): \emph{\bibinfo{title}{Graph Automata: The Algebraic
  Properties of Abelian Relational Graphoids}}.
\newblock In \bibinfo{editor}{Werner \surnamestart Kuich\surnameend} \&
  \bibinfo{editor}{George \surnamestart Rahonis\surnameend}, editors: {\slshape
  \bibinfo{booktitle}{Algebraic Foundations in Computer Science - Essays
  Dedicated to Symeon Bozapalidis on the Occasion of His Retirement}},
  {\slshape \bibinfo{series}{LNCS}} \bibinfo{volume}{7020},
  \bibinfo{publisher}{Springer}, pp. \bibinfo{pages}{168--182},
  \doi{10.1007/978-3-642-24897-9\_8}.

\bibitemdeclare{article}{Kaminski-Pinter:92}
\bibitem{Kaminski-Pinter:92}
\bibinfo{author}{Michael \surnamestart Kaminski\surnameend} \&
  \bibinfo{author}{Shlomit~S. \surnamestart Pinter\surnameend}
  (\bibinfo{year}{1992}): \emph{\bibinfo{title}{Finite Automata on Directed
  Graphs}}.
\newblock {\slshape \bibinfo{journal}{Journal of Computer and System Sciences}}
  \bibinfo{volume}{44}(\bibinfo{number}{3}), pp. \bibinfo{pages}{425--446},
  \doi{10.1016/0022-0000(92)90012-8}.

\bibitemdeclare{article}{Lange-Welzl:87}
\bibitem{Lange-Welzl:87}
\bibinfo{author}{Klaus-J{\"o}rn \surnamestart Lange\surnameend} \&
  \bibinfo{author}{Emo \surnamestart Welzl\surnameend} (\bibinfo{year}{1987}):
  \emph{\bibinfo{title}{String Grammars with Disconnecting or a Basic Root of
  the Difficulty in Graph Grammar Parsing}}.
\newblock {\slshape \bibinfo{journal}{Discrete Applied Mathematics}}
  \bibinfo{volume}{16}, pp. \bibinfo{pages}{17--30},
  \doi{10.1016/0166-218X(87)90051-5}.

\bibitemdeclare{article}{Lautemann:90}
\bibitem{Lautemann:90}
\bibinfo{author}{Clemens \surnamestart Lautemann\surnameend}
  (\bibinfo{year}{1990}): \emph{\bibinfo{title}{The complexity of graph
  languages generated by hyperedge replacement}}.
\newblock {\slshape \bibinfo{journal}{Acta Informatica}} \bibinfo{volume}{27},
  pp. \bibinfo{pages}{399--421}, \doi{10.1007/BF00289017}.

\bibitemdeclare{phdthesis}{Pennemann:09}
\bibitem{Pennemann:09}
\bibinfo{author}{Karl-Heinz \surnamestart Pennemann\surnameend}
  (\bibinfo{year}{2009}): \emph{\bibinfo{title}{Development of Correct Graph
  Transformation Systems}}.
\newblock \bibinfo{type}{Dissertation},
  \bibinfo{school}{Carl-von-Ossietzky-Universit\"at Oldenburg}.
\newblock \urlprefix\url{http://oops.uni-oldenburg.de/884/1/pendev09.pdf}.

\bibitemdeclare{incollection}{Vogler:91}
\bibitem{Vogler:91}
\bibinfo{author}{Walter \surnamestart Vogler\surnameend}
  (\bibinfo{year}{1991}): \emph{\bibinfo{title}{Recognizing Edge Replacement
  Graph Languages in Cubic Time}}.
\newblock In \bibinfo{editor}{H.~\surnamestart Ehrig\surnameend},
  \bibinfo{editor}{H.-J. \surnamestart Kreowski\surnameend} \&
  \bibinfo{editor}{G.~\surnamestart Rozenberg\surnameend}, editors: {\slshape
  \bibinfo{booktitle}{Proc. Fourth Intl. Workshop on Graph Grammars and Their
  Application to Comp. Sci.}}, {\slshape \bibinfo{series}{LNCS}}
  \bibinfo{volume}{532}, \bibinfo{publisher}{Springer}, pp.
  \bibinfo{pages}{676--687}, \doi{10.1007/BFb0017421}.

\bibitemdeclare{article}{Witt:81}
\bibitem{Witt:81}
\bibinfo{author}{Kurt{-}Ulrich \surnamestart Witt\surnameend}
  (\bibinfo{year}{1981}): \emph{\bibinfo{title}{Finite Graph-Acceptors and
  Regular Graph-Languages}}.
\newblock {\slshape \bibinfo{journal}{Information and Control}}
  \bibinfo{volume}{50}(\bibinfo{number}{3}), pp. \bibinfo{pages}{242--258},
  \doi{10.1016/S0019-9958(81)90351-X}.

\end{thebibliography}

\end{document}